\newtheorem{theorem}{Theorem}[section]
\newtheorem{corollary}[theorem]{Corollary}
\newtheorem{lemma}[theorem]{Lemma}
\newtheorem{Definition}[theorem]{Definition}
\newtheorem{Example}[theorem]{Example}
\newtheorem{Remark}[theorem]{Remark}
\newenvironment{remark}{\begin{Remark}\begin{em}}{\end{em}\end{Remark}}
\DeclareMathOperator{\tr}{tr}
\address{Miran Jeong, Jinmi Hwang, and Sejong Kim \\ Department of Mathematics, Chungbuk National University, Cheongju 28644, Korea}
\email{jmr4006@chungbuk.ac.kr, jinmi0401@chungbuk.ac.kr, skim@chungbuk.ac.kr}
\begin{document}

\title[Right mean for the $\alpha-z$ Bures-Wasserstein quantum divergence]{Right mean for the $\alpha-z$ Bures-Wasserstein quantum divergence}

\author{Miran Jeong, Jinmi Hwang, and Sejong Kim}

\date{}
\maketitle

\begin{abstract}
A new quantum divergence induced from the $\alpha-z$ R\'{e}nyi relative entropy, called the $\alpha-z$ Bures-Wasserstein quantum divergence, has been recently introduced. We investigate in this paper properties of the right mean, which is a unique minimizer of the weighted sum of $\alpha-z$ Bures-Wasserstein quantum divergences to each points. Many interesting operator inequalities of the right mean with the matrix power mean including the Cartan mean are presented. Moreover, we verify the trace inequality with the Wasserstein mean and provide bounds for the Hadamard product of two right means.
\vspace{5mm}

\noindent {\bf Mathematics Subject Classification} (2010): 81P17, 15B48

\noindent {\bf Keywords}: R\'{e}nyi relative entropy, Bures-Wasserstei quantum divergence, left mean, power mean, Cartan mean, Wasserstein mean
\end{abstract}

\section{Introduction}

The Fr\'{e}chet mean on a metric space $(X, d)$ is the least squares mean, which is a minimizer of the weighted sum of squared distances to each variable:
\begin{displaymath}
\underset{x \in X}{\arg \min} \sum_{j=1}^{n} w_{j} d^{2}(a_{j}, x),
\end{displaymath}
where $a_{1}, \dots, a_{n} \in X$ and $\omega = (w_{1}, \dots, w_{n})$ is a positive probability vector. Although it may be difficult to determine the existence of the Fr\'{e}chet mean on given a metric space in general, it has been shown on a non-positive curvature metric space or Hadamard space (a complete metric space satisfying semi-parallelogram law) that the Fr\'{e}chet mean uniquely exists. The canonical example of the Hadamard space is the open convex cone $\mathbb{P}_{m}$ of all $m \times m$ positive definite Hermitian matrices equipped with the Riemannian trace metric $d_{R}(A, B) = \Vert \log A^{-1/2} B A^{-1/2} \Vert_{2}$. Moreover, the Fr\'{e}chet mean on this Hadamard space $(\mathbb{P}_{m}, d_{R})$ is the weighted Cartan mean (Karcher mean)
\begin{displaymath}
\Lambda(\omega; A_{1}, \dots, A_{n}) = \underset{X \in \mathbb{P}_{m}}{\arg \min} \sum_{j=1}^{n} w_{j} d_{R}^{2}(A_{j}, X).
\end{displaymath}
Many interesting approaches to the weighted Cartan mean such as the family of power means \cite{LP} and the deterministic sequence \cite{Ho, LP14} have been developed.

% Furthermore, one can show the existence and uniqueness of the Fr\'{e}chet mean by showing that the objective function $\displaystyle x \mapsto \sum_{j=1}^{n} w_{j} d^{2}(a_{j}, x)$ is strictly convex. 
The Bures-Wasserstein metric on the open convex cone $\mathbb{P}_{m}$ is given by
\begin{displaymath}
d_{W}(A, B) = \left[ \tr \left( \frac{A + B}{2} \right) - \tr (A^{1/2} B A^{1/2})^{1/2} \right]^{1/2},
\end{displaymath}
which coincides with the $L_{2}$-Wasserstein distance of two Gaussian measures with mean zero and covariance matrices $A$ and $B$ \cite{ABCM}. On the other hand, it does not give us the non-positive curvature metric on $\mathbb{P}_{m}$. Nevertheless, the objective function $\displaystyle X \mapsto \sum_{j=1}^{n} w_{j} d_{W}^{2}(A_{j}, X)$ is strictly convex \cite{BJL}, so the least squares mean uniquely exists. We call it the weighted Wasserstein mean
\begin{displaymath}
\Omega(\omega; A_{1}, \dots, A_{n}) = \underset{X \in \mathbb{P}_{m}}{\arg \min} \sum_{j=1}^{n} w_{j} d_{W}^{2}(A_{j}, X).
\end{displaymath}
Many remarkable properties such as the iteration approach using optimal transport maps \cite{ABCM}, the extended Lie-Trotter-Kato formula and operator inequalities \cite{HK, KL20} have been established.

Divergence is introduced as a distance-like function that does not necessarily satisfy the symmetry nor the triangle inequality. In many literatures, it is a generalization of squared distance. It is originated in statistics, probability theory and information theory, and recently plays important roles in many practical areas such as signal processing \cite{VO}, medical image analysis \cite{PMV}, econometrics \cite{Ul}, and clustering algorithms \cite{AB, BMDG, DT}. Since the divergence $D$ on a set $X$ is not symmetric in general, we have two kinds of weighted means, called respectively the right mean and left mean,
\begin{center}
$\displaystyle \underset{x \in X}{\arg \min} \sum_{j=1}^{n} w_{j} D(a_{j}, x)$ \ and \ $\displaystyle \underset{x \in X}{\arg \min} \sum_{j=1}^{n} w_{j} D(x, a_{j})$
\end{center}
for given $a_{1}, \dots, a_{n} \in X$. Chebbi and Moakher \cite{CM} have first introduced the right mean and left mean of positive definite Hermitian matrices for the log-determinant $\alpha$-divergence:
\begin{displaymath}
D_{LD}^{\alpha} (A, B) = \frac{4}{1 - \alpha^{2}} \left[ \log \det \left( \frac{1 - \alpha}{2} A + \frac{1 + \alpha}{2} B \right) - \log (\det A)^{\frac{1 - \alpha}{2}} (\det B)^{\frac{1 + \alpha}{2}} \right]
\end{displaymath}
for $\alpha \in (-1, 1)$ and $A, B \in \mathbb{P}_{m}$.

A quantum divergence is a smooth function $\Phi: \mathbb{P}_{m} \times \mathbb{P}_{m} \to \mathbb{R}$ satisfying
\begin{itemize}
\item[(i)] $\Phi(A, B) \geq 0$, and equality holds if and only if $A = B$,
\item[(ii)] the first derivative $D \Phi$ with respect to the second variable vanishes on the diagonal, that is,
\begin{displaymath}
D \Phi(A, X) |_{X=A} = 0,
\end{displaymath}
\item[(iii)] the second derivative is positive on the diagonal, that is,
\begin{displaymath}
D^{2} \Phi(A, X) |_{X=A} (Y,Y) \geq 0
\end{displaymath}
for any Hermitian matrix $Y$.
\end{itemize}
See \cite[Sections 1.2 and 1.3]{Am} for more information. The well-known examples are the Kullback-Leibler divergence, and the Bregman divergence corresponding to a strictly convex differentiable function. Bhatia, Gaubert, and Jain \cite{BGJ} have recently introduced different kinds of quantum divergences of the form
\begin{displaymath}
\Phi(A, B) = \tr \mathcal{A}(A, B) - \tr \mathfrak{G}(A, B),
\end{displaymath}
where $\displaystyle \mathcal{A}(A, B) = \frac{A + B}{2}$ is the two-variable arithmetic mean and $\mathfrak{G}(A, B)$ is a matrix version of two-variable geometric mean such as the Riemannian geodesic midpoint $\mathfrak{G}(A, B) = A^{1/2} (A^{-1/2} B A^{-1/2})^{1/2} A^{1/2} = A \# B$, and the log-Euclidean mean $\displaystyle \mathfrak{G}(A, B) = \exp \left( \frac{\log A + \log B}{2} \right)$. Moreover, they have also provided the left means corresponding to such quantum divergences and their characterizations.

A new quantum divergence, called the $\alpha-z$ Bures-Wasserstein quantum divergence, has been recently introduced \cite{DLVV}: for $0 < \alpha \leq z < 1$
\begin{equation} \label{E:divergence}
\Phi_{\alpha, z}(A, B) = \tr ((1 - \alpha) A + \alpha B) - \tr  \left( A^{\frac{1 - \alpha}{2z}} B^{\frac{\alpha}{z}} A^{\frac{1 - \alpha}{2z}} \right)^{z}.
\end{equation}
For $\alpha = z = \frac{1}{2}$, $\Phi_{\alpha, z}(A, B)$ coincides with the Bures-Wasserstein metric of $A$ and $B$ \cite{BJL}. It has been shown that the quantum divergence $\Phi_{\alpha, z}$ is invariant under any unitary congruence transformation and tensor product with another density matrix. Also, the right mean for the $\alpha-z$ Bures-Wasserstein quantum divergence $\Phi_{\alpha, z}$ exists uniquely, so we call it the $\alpha-z$ weighted right mean. Moreover, it coincides with the unique positive definite solution of the equation
\begin{displaymath}
X = \sum_{j=1}^{n} w_{j} \left( X^{\frac{\alpha}{2z}} A_{j}^{\frac{1 - \alpha}{z}} X^{\frac{\alpha}{2z}} \right)^{z}.
\end{displaymath}
In this paper we verify interesting operator inequalities of the $\alpha-z$ weighted right mean with matrix power means including the weighted arithmetic and harmonic means. It provides the log-majorization properties among the $\alpha-z$ weighted right mean, arithmetic mean, Cartan mean, and harmonic mean. Furthermore, we show the trace inequality with the Wasserstein mean and bounds for Hadamard product of two $\alpha-z$ weighted right means.

\section{Right mean for the $\alpha-z$ Bures-Wasserstein quantum divergence}

Let $M_{m, k}$ be the set of all $m \times k$ complex matrices, and we simply denote as $M_{m} := M_{m,m}$. Let $\mathbb{H}_{m}$ be the real vector space of all $m \times m$ Hermitian matrices. Let $\mathbb{P}_{m}$ be the open convex cone of all $m \times m$ positive definite Hermitian matrices. For $A, B \in \mathbb{P}_{m}$, $0 \leq \alpha \leq 1$ and $z > 0$
\begin{displaymath}
Q_{\alpha, z}(A, B) = \left( A^{\frac{1 - \alpha}{2z}} B^{\frac{\alpha}{z}} A^{\frac{1 - \alpha}{2z}} \right)^{z}
\end{displaymath}
is the matrix version of the $\alpha-z$ R\'{e}nyi relative entropy \cite{AD}. Especially, $Q_{\alpha, \alpha}(A, B)$ is known as the sandwiched quasi-relative entropy \cite{WWY}. Recently, a new quantum divergence, called the $\alpha-z$ Bures-Wasserstein quantum divergence, has been introduced \cite{DLVV}: for $0 < \alpha \leq z < 1$
\begin{displaymath}
\Phi_{\alpha, z}(A, B) = \tr ((1 - \alpha) A + \alpha B) - \tr Q_{\alpha, z}(A, B).
\end{displaymath}
For $\alpha = z = \frac{1}{2}$, $\Phi_{\alpha, z}(A, B)$ coincides with the Bures-Wasserstein metric of $A$ and $B$ \cite{BJL}.

For an $n$-tuple $\mathbb{A} = (A_{1}, \dots, A_{n}) \in \mathbb{P}_{m}^{n}$ and a positive probability vector $\omega = (w_{1}, \dots, w_{n})$, we consider the minimization problem
\begin{equation} \label{E:minimization}
\underset{X \in \mathbb{P}_{m}}{\arg \min} \sum_{j=1}^{n} w_{j} \Phi_{\alpha, z}(A_{j}, X).
\end{equation}
By the strict concavity of the map $\mathbb{P}_{m} \ni A \mapsto \tr A^{t}$ for $t \in [0,1]$ from \cite{BJL18} and by the linearity and monotonicity of the congruence transformation, the map $\mathbb{P}_{m} \ni X \mapsto \tr Q_{\alpha, z}(A, X)$ is strictly concave. Then the objective function $\displaystyle F(X) = \sum_{j=1}^{n} w_{j} \Phi_{\alpha, z}(A_{j}, X)$ is strictly convex, so the minimization \eqref{E:minimization} has a unique solution in $\mathbb{P}_{m}$. By vanishing the gradient of $F(X)$, we obtain from \cite{DLVV} that the unique solution coincides with the unique positive definite solution of the matrix equation
\begin{equation} \label{E:Renyi}
X = \sum_{j=1}^{n} w_{j} Q_{1 - \alpha, z}(X, A_{j}).
\end{equation}
We write such a unique minimizer of \eqref{E:minimization} as $\mathcal{R}_{\alpha, z}(\omega; \mathbb{A})$ and call the $\alpha-z$ weighted right mean.

\begin{lemma} \label{L:equation}
Let $\mathbb{A} = (A_{1}, \dots, A_{n}) \in \mathbb{P}_{m}^{n}$ and let $\omega = (w_{1}, \dots, w_{n})$ be a positive probability vector. Then for $0 < \alpha \leq z < 1$ the $\alpha-z$ weighted right mean $\mathcal{R}_{\alpha, z}(\omega; \mathbb{A})$ is the unique positive definite solution of the matrix equation
\begin{equation} \label{E:matrix equation}
X^{1 - \frac{\alpha}{z}} = \sum_{j=1}^{n} w_{j} X^{- \frac{\alpha}{z}} \#_{z} A_{j}^{\frac{1 - \alpha}{z}}.
\end{equation}
\end{lemma}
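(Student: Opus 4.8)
The plan is to start from the characterizing equation \eqref{E:Renyi} and rewrite $Q_{1-\alpha, z}(X, A_j)$ so that the factor $X$ appearing inside can be pulled out uniformly. Recall that by definition
\begin{displaymath}
Q_{1 - \alpha, z}(X, A_j) = \left( X^{\frac{\alpha}{2z}} A_j^{\frac{1 - \alpha}{z}} X^{\frac{\alpha}{2z}} \right)^{z},
\end{displaymath}
since the roles of the exponents $\frac{1-\beta}{2z}$ and $\frac{\beta}{z}$ with $\beta = 1-\alpha$ give exponents $\frac{\alpha}{2z}$ on $X$ and $\frac{1-\alpha}{z}$ on $A_j$. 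The key observation is that this expression is exactly a weighted geometric mean in disguise: for positive definite $P, Q$ one has $\left( P^{1/2} Q P^{1/2} \right)^{z} = P \#_{z} Q$ when $z \in [0,1]$, but here the outer power $z$ sits on $X^{\alpha/2z} A_j^{(1-\alpha)/z} X^{\alpha/2z}$ rather than on $X^{1/2}(\cdot)X^{1/2}$. I would therefore factor $X^{\alpha/(2z)} A_j^{(1-\alpha)/z} X^{\alpha/(2z)} = X^{\alpha/(2z)} \left( X^{-\alpha/(2z)} \cdot X^{\alpha/z} A_j^{(1-\alpha)/z} X^{\alpha/z} \cdot X^{-\alpha/(2z)} \right)$—this is not quite right, so instead I would use the identity $\left( C^{1/2} D C^{1/2} \right)^{z} = C^{1/2}\left( C^{-1/2} D C^{-1/2} \cdots \right)$ carefully; the cleanest route is the well-known congruence-invariance of the weighted geometric mean, namely $M (M^{-1} \#_z N) M = M \#_z (M N M)$-type manipulations, applied with $M = X^{\alpha/(2z)}$.

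Concretely, set $s = \alpha/z \in (0,1]$. The right-hand side of \eqref{E:Renyi} is $\sum_j w_j \left( X^{s/2} A_j^{(1-\alpha)/z} X^{s/2} \right)^{z}$. Using the identity $\left( X^{s/2} Y X^{s/2} \right)^{z} = X^{s/2}\left( X^{s/2} Y X^{s/2} \right)^{z - 1} X^{s/2} \cdot \left( X^{s/2} Y X^{s/2} \right)^{1-?}$—rather than fumble, I would invoke the standard fact that for any $K \in \mathbb{P}_m$ and $t \in [0,1]$, $K^{t} = K^{1/2} \left( K^{-1/2} K^{1/2} \right) \cdots$; the operative and genuinely useful identity is
\begin{displaymath}
\left( X^{s/2} Y X^{s/2} \right)^{z} = X^{s/2} \left[ X^{-s/2} \left( X^{s/2} Y X^{s/2} \right) X^{-s/2} \right]^{z} X^{s/2}
\quad\text{is false in general,}
\end{displaymath}
so the correct tool is: $A \#_z B = A^{1/2}(A^{-1/2}BA^{-1/2})^z A^{1/2}$ and hence $\left(X^{s/2} Y X^{s/2}\right)^z$ is, with $A = X^{-s}$ and $B = Y$ after a congruence by $X^{s/2}$, recognized as $X^{s/2}\bigl(X^{s/2}\,(X^{-s}\#_z Y)\,X^{s/2}\bigr)$. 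The clean statement I would actually prove and use is $\left( X^{s/2} Y X^{s/2} \right)^z = X^{s}\#_{z}\!\bigl(X^{s/2} Y X^{s/2}\bigr)$ when... I will instead simply multiply \eqref{E:Renyi} on both sides by $X^{-s/2}$ and recognize the middle term: $X^{-s/2}\left(X^{s/2} A_j^{(1-\alpha)/z} X^{s/2}\right)^z X^{-s/2} = \left(X^{-s/2}\cdot X^{s/2} A_j^{(1-\alpha)/z} X^{s/2}\cdot X^{-s/2}\right)^z$ only if the conjugating matrix commutes, which it does not; the honest derivation replaces this by the geometric-mean identity $X^{-s} \#_z \bigl(X^{s/2}A_j^{(1-\alpha)/z}X^{s/2}\bigr)$?

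Let me state the plan without the false steps: I would use the congruence covariance of $\#_z$, namely $C(P\#_z Q)C^* = (CPC^*)\#_z(CQC^*)$, with $C = X^{-s/2}$, $P = X^{s}$, $Q = $ something, to convert $X^{-s/2}\left(X^{s/2} A_j^{(1-\alpha)/z} X^{s/2}\right)^z X^{-s/2}$. Observe $\left(X^{s/2} A_j^{(1-\alpha)/z} X^{s/2}\right)^z = X^{s}\#_z\bigl(X^{s/2}A_j^{(1-\alpha)/z}X^{s/2}\cdot \text{?}\bigr)$ is not clean; what IS clean is that $\left(X^{s/2}A_j^{(1-\alpha)/z}X^{s/2}\right)^z$ equals $X^{s/2}\bigl((X^{s/2})^{-1}\,\text{stuff}\bigr)$. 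The correct and standard identity is simply $(X^{s/2} R X^{s/2})^z = X^{s/2}(X^{s/2}R X^{s/2})^{z-1}X^{s/2}\cdot$(no). Given the risk of circularity here, the honest plan is: define $Y = X^{s/2} A_j^{(1-\alpha)/z} X^{s/2}$, note $Y^z = (X^{-s}\, \sharp_z\, ?)$ — I abandon the attempt to disguise and instead present the intended short argument.

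\medskip
\emph{Intended short argument.} Multiply \eqref{E:Renyi} on the left and right by $X^{-\alpha/(2z)}$. The left side becomes $X^{1 - \alpha/z}$. For the $j$-th summand on the right, set $C = X^{-\alpha/(2z)}$; then
\begin{displaymath}
C\left( X^{\frac{\alpha}{2z}} A_j^{\frac{1-\alpha}{z}} X^{\frac{\alpha}{2z}} \right)^{z} C
= \left( C^{-1}\right)^{z}\text{-type}\dots
\end{displaymath}
— the genuinely valid step is the identity $C(P\#_z Q)C^{*}=(CPC^{*})\#_z(CQC^{*})$ together with the observation that $\left(X^{\frac{\alpha}{2z}} A_j^{\frac{1-\alpha}{z}} X^{\frac{\alpha}{2z}}\right)^{z} = X^{\frac{\alpha}{z}}\#_z\left(X^{-\frac{\alpha}{z}}\cdot\left(X^{\frac{\alpha}{2z}} A_j^{\frac{1-\alpha}{z}} X^{\frac{\alpha}{2z}}\right)\cdot\right)$? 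At this point I conclude that \textbf{the main obstacle} is precisely the algebraic manipulation converting the outer $z$-power of a non-symmetric congruence $X^{\alpha/(2z)} A_j^{(1-\alpha)/z} X^{\alpha/(2z)}$ into the geometric mean $X^{-\alpha/z}\#_z A_j^{(1-\alpha)/z}$ after conjugating by $X^{-\alpha/(2z)}$; this rests on the single identity
\begin{displaymath}
X^{-\frac{\alpha}{2z}}\left( X^{\frac{\alpha}{2z}} A_j^{\frac{1-\alpha}{z}} X^{\frac{\alpha}{2z}} \right)^{z} X^{-\frac{\alpha}{2z}} = X^{-\frac{\alpha}{z}} \#_{z} A_j^{\frac{1-\alpha}{z}},
\end{displaymath}
which follows from $(P^{1/2}QP^{1/2})^z = P^{1/2}(P^{-1/2}QP^{-1/2})^z P^{1/2}$... no — it follows from the definition $P\#_z Q = P^{1/2}(P^{-1/2}QP^{-1/2})^z P^{1/2}$ by taking $P = X^{-\alpha/z}$ and $Q = A_j^{(1-\alpha)/z}$: then $P^{1/2} = X^{-\alpha/(2z)}$, $P^{-1/2} = X^{\alpha/(2z)}$, so $P\#_z Q = X^{-\alpha/(2z)}\left(X^{\alpha/(2z)} A_j^{(1-\alpha)/z} X^{\alpha/(2z)}\right)^z X^{-\alpha/(2z)}$, exactly as needed. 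Summing over $j$ with weights $w_j$ and using $\sum_j w_j = 1$ then yields \eqref{E:matrix equation}. Conversely, any positive definite solution of \eqref{E:matrix equation}, read backwards through this identity, satisfies \eqref{E:Renyi}, and uniqueness for \eqref{E:Renyi} is already established; hence $X = \mathcal{R}_{\alpha,z}(\omega;\mathbb{A})$. The only point requiring a word of care is that $0 < \alpha \le z < 1$ forces $\alpha/z \in (0,1]$, so the exponent $z$ and the weight $z$ in $\#_z$ are both legitimate parameters in $[0,1]$, and all fractional powers of the positive definite matrices involved are well defined.
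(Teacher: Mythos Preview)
Your final argument is correct and is exactly the paper's approach: conjugate the defining equation \eqref{E:Renyi} by $X^{-\alpha/(2z)}$ and recognize each term $X^{-\alpha/(2z)}\bigl(X^{\alpha/(2z)}A_j^{(1-\alpha)/z}X^{\alpha/(2z)}\bigr)^z X^{-\alpha/(2z)}$ as $X^{-\alpha/z}\#_z A_j^{(1-\alpha)/z}$ directly from the definition of $\#_z$. The many false starts preceding it are unnecessary---the entire proof is the two-line identity you wrote at the end, plus the observation (which you include and the paper leaves implicit) that the transformation is reversible, so uniqueness transfers from \eqref{E:Renyi}.
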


\begin{proof}
Let $X = \mathcal{R}_{\alpha, z}(\omega; \mathbb{A})$. By \eqref{E:Renyi}
\begin{displaymath}
X = \sum_{j=1}^{n} w_{j} \left( X^{\frac{\alpha}{2z}} A_{j}^{\frac{1 - \alpha}{z}} X^{\frac{\alpha}{2z}} \right)^{z}.
\end{displaymath}
Taking congruence transformation by $X^{- \frac{\alpha}{2z}}$ yields
\begin{displaymath}
X^{1 - \frac{\alpha}{z}} = \sum_{j=1}^{n} w_{j} X^{- \frac{\alpha}{2z}} \left( X^{\frac{\alpha}{2z}} A_{j}^{\frac{1 - \alpha}{z}} X^{\frac{\alpha}{2z}} \right)^{z} X^{- \frac{\alpha}{2z}} = \sum_{j=1}^{n} w_{j} X^{- \frac{\alpha}{z}} \#_{z} A_{j}^{\frac{1 - \alpha}{z}}.
\end{displaymath}
\end{proof}

Let $\Delta_{n}$ be the simplex of positive probability vectors in $\mathbb{R}^{n}$ convexly spanned by the unit coordinate vectors. Let $\mathbb{A} = (A_{1},\dots, A_{n}) \in \mathbb{P}_{m}^{n}$, $\omega = (w_{1},\dots, w_{n}) \in \Delta_{n}$, $\sigma \in S^{n}$ a permutation on $n$-letters, $p \in \mathbb{R}$ and $M \in \textrm{GL}_{m}$, the general linear group. For convenience, we denote as
\begin{displaymath}
\begin{split}
\omega_{\sigma} & := (w_{\sigma(1)},\dots, w_{\sigma(n)}), \\
\mathbb{A}_{\sigma} & := (A_{\sigma(1)},\dots, A_{\sigma(n)}), \\
\mathbb{A}^{p} & := (A_{1}^{p},\dots, A_{n}^{p}), \\
M \mathbb{A} M^{*} & := (M A_{1} M^{*},\dots, M A_{n} M^{*}),
\end{split}
\end{displaymath}
and
\begin{displaymath}
\begin{split}
\hat{\omega} & := \frac{1}{1 - w_{n}}(w_{1}, \dots, w_{n-1}) \in \Delta_{n-1}, \\
\omega^{(k)} & := \frac{1}{k} (\underbrace{w_{1}, \dots, w_{n}}, \dots, \underbrace{w_{1}, \dots, w_{n}}) \in \Delta_{nk}, \\
\mathbb{A}^{(k)} & := (\underbrace{A_{1}, \dots, A_{n}}, \dots, \underbrace{A_{1}, \dots, A_{n}}) \in \mathbb{P}_{m}^{nk},
\end{split}
\end{displaymath}
of which number of tuples is $k \in \mathbb{N}$.

\begin{lemma} \cite{HJK} \label{L:properties}
The $\alpha-z$ weighted right mean $\mathcal{R}_{\alpha, z}$ satisfies the following:
\begin{itemize}
\item[(1)] $\displaystyle \mathcal{R}_{\alpha, z}(\omega; \mathbb{A}) = \left( \sum_{j=1}^{n} w_{j} A_{j}^{1 - \alpha} \right)^{\frac{1}{1 - \alpha}}$ if $A_{j}$'s commute;
\item[(2)] $\mathcal{R}_{\alpha, z}(\omega; c \mathbb{A}) = c \mathcal{R}_{\alpha, z}(\omega; \mathbb{A})$ for any $c > 0$;
\item[(3)] $\mathcal{R}_{\alpha, z}(\omega_{\sigma}; \mathbb{A}_{\sigma}) = \mathcal{R}_{\alpha, z}(\omega; \mathbb{A})$ for any permutation $\sigma$ on $\{ 1, \dots, n \}$;
\item[(4)] $\mathcal{R}_{\alpha, z}(\omega^{(k)}; \mathbb{A}^{(k)}) = \mathcal{R}_{\alpha, z}(\omega; \mathbb{A})$ for any natural number $k$;
\item[(5)] $\mathcal{R}_{\alpha, z}(\omega; U \mathbb{A} U^{*}) = U \mathcal{R}_{\alpha, z}(\omega; \mathbb{A}) U^{*}$ for any unitary matrix $U$;
\item[(6)] $\displaystyle \det \mathcal{R}_{\alpha, z}(\omega; \mathbb{A}) \geq \prod_{j=1}^{n} (\det A_{j})^{w_{j}}$, and equality holds if and only if $A_{1} = \cdots = A_{n}$;
\item[(7)] $\displaystyle X = \mathcal{R}_{\alpha, z}(\omega; A_1, \dots, A_{n-1}, X)$ implies that $X = \mathcal{R}_{\alpha, z}(\hat\omega; A_1, \dots, A_{n-1})$;
\item[(8)] $\mathcal{R}_{\alpha, z}(\omega;\mathbb{A}) = \mathcal{R}_{\alpha, z}\left( \sum \limits_{j=1}^{k} w_{j}, w_{k+1}, \dots, w_{n}; A_1, A_{k+1}, \dots, A_n \right)$ if $A_1 = \cdots = A_k$ for $1 \leq k < n$.
\end{itemize}
\end{lemma}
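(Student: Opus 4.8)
The plan is to derive all eight items from the fixed-point characterization \eqref{E:Renyi}: $X = \mathcal{R}_{\alpha,z}(\omega;\mathbb{A})$ is the \emph{unique} $X \in \mathbb{P}_m$ satisfying $X = \sum_{j=1}^{n} w_j Q_{1-\alpha,z}(X, A_j)$, where $Q_{1-\alpha,z}(X,A_j) = (X^{\frac{\alpha}{2z}} A_j^{\frac{1-\alpha}{z}} X^{\frac{\alpha}{2z}})^z$. For each property I would exhibit an explicit candidate, verify it solves the corresponding instance of \eqref{E:Renyi}, and conclude by uniqueness. Two elementary facts are used repeatedly: for a scalar $c > 0$ one has $c^{\frac{\alpha}{2z}} c^{\frac{1-\alpha}{z}} c^{\frac{\alpha}{2z}} = c^{1/z}$, so $Q_{1-\alpha,z}(cX, cA) = c\, Q_{1-\alpha,z}(X,A)$; and $Q_{1-\alpha,z}(X,X) = (X^{1/z})^z = X$.

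Granting this, items (2)--(5), (7) and (8) are immediate substitutions. In (2), $cX$ solves \eqref{E:Renyi} for $c\mathbb{A}$ by the rescaling identity. In (3) and (4) the sum on the right of \eqref{E:Renyi} is unchanged under permuting the pairs $(w_j, A_j)$ or under splitting each weight into equal copies, so the defining equation is literally the same. In (5), real powers and the $z$-th power commute with conjugation by a unitary $U$, hence $Q_{1-\alpha,z}(UXU^*, UA_jU^*) = U\, Q_{1-\alpha,z}(X,A_j)\, U^*$ and $UXU^*$ solves the equation for $U\mathbb{A}U^*$. In (7), substituting $A_n = X$ into \eqref{E:Renyi} and using $Q_{1-\alpha,z}(X,X) = X$ gives $X = \sum_{j=1}^{n-1} w_j Q_{1-\alpha,z}(X,A_j) + w_n X$, i.e. $X = \sum_{j=1}^{n-1} \frac{w_j}{1 - w_n} Q_{1-\alpha,z}(X,A_j)$, which is exactly \eqref{E:Renyi} for $\hat\omega$ and $(A_1,\dots,A_{n-1})$. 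Item (8) is the same remark with the coincident terms $A_1 = \cdots = A_k$ merged into one summand of weight $\sum_{j=1}^{k} w_j$.

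For (1), when the $A_j$ commute the candidate $X = \left( \sum_{j=1}^{n} w_j A_j^{1-\alpha} \right)^{1/(1-\alpha)}$ commutes with each $A_j$, so $Q_{1-\alpha,z}(X,A_j) = (X^{\alpha/z} A_j^{(1-\alpha)/z})^z = X^{\alpha} A_j^{1-\alpha}$ and $\sum_j w_j X^{\alpha} A_j^{1-\alpha} = X^{\alpha} X^{1-\alpha} = X$. For (6), take determinants in \eqref{E:Renyi}: by multiplicativity $\det Q_{1-\alpha,z}(X,A_j) = (\det X)^{\alpha} (\det A_j)^{1-\alpha}$, and concavity of $A \mapsto \log \det A$ on $\mathbb{P}_m$ gives $\det X \geq \prod_{j=1}^{n} \left[ (\det X)^{\alpha} (\det A_j)^{1-\alpha} \right]^{w_j} = (\det X)^{\alpha} \prod_{j=1}^{n} (\det A_j)^{(1-\alpha)w_j}$; dividing by $(\det X)^{\alpha}$ and raising to the power $1/(1-\alpha)$ yields the inequality. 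Equality in the concavity step (all $w_j > 0$) forces the matrices $Q_{1-\alpha,z}(X,A_j)$ to coincide; since $t \mapsto t^z$ and congruence by $X^{\alpha/(2z)}$ are injective on $\mathbb{P}_m$, this forces $A_j^{(1-\alpha)/z}$, hence $A_j$, to be equal, while the converse follows at once from (1).

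The arguments are routine; the only step warranting care is the equality case of (6), where one must strip off the $z$-th power and the congruence to pull the equality condition back to the $A_j$. I do not foresee a genuine obstacle.
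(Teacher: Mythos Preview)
Your proposal is correct and follows essentially the same route as the paper's own argument (which appears in the source after the bibliography and in the cited reference \cite{HJK}): each item is checked by exhibiting a candidate and verifying it satisfies the fixed-point equation \eqref{E:Renyi}, then invoking uniqueness; item (6) is obtained by applying the strict concavity of $A \mapsto \log\det A$ to \eqref{E:Renyi}. Your treatment is in fact slightly more complete, since you spell out items (7) and (8) and trace the equality case of (6) explicitly back from the $Q_{1-\alpha,z}(X,A_j)$ to the $A_j$, whereas the paper's sketch leaves that last step implicit.
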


\begin{lemma} \label{L:boundedness}
Let $\omega = (w_{1}, \dots, w_{n}) \in \Delta_{n}$, and let $\mathbb{A} = (A_{1}, \dots, A_{n}) \in \mathbb{P}_{m}^{n}$ satisfying $a I \leq A_{j} \leq b I$ for all $j$ and some $0 < a \leq b$.  Then $a I \leq \mathcal{R}_{\alpha, z}(\omega; \mathbb{A}) \leq b I$ for $0 < \alpha \leq z < 1$.
\end{lemma}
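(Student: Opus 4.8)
The plan is to avoid the variational problem entirely and work from the fixed‑point characterization \eqref{E:Renyi}. Write $X=\mathcal{R}_{\alpha,z}(\omega;\mathbb{A})$, so that
$X=\sum_{j=1}^{n} w_{j}\left(X^{\frac{\alpha}{2z}}A_{j}^{\frac{1-\alpha}{z}}X^{\frac{\alpha}{2z}}\right)^{z}$.
The idea is to push the two‑sided hypothesis $aI\le A_{j}\le bI$ through this identity and extract a self‑referential inequality for $X$ alone, which then pins down its spectrum.

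For the upper bound, first observe that since $bI$ is scalar, $A_{j}\le bI$ forces $A_{j}^{\frac{1-\alpha}{z}}\le b^{\frac{1-\alpha}{z}}I$: here one only needs ordinary monotonicity of $t\mapsto t^{(1-\alpha)/z}$ applied to eigenvalues, which is important because the exponent $(1-\alpha)/z$ need not lie in $[0,1]$. Conjugating by the positive definite matrix $X^{\frac{\alpha}{2z}}$ preserves the order and gives $X^{\frac{\alpha}{2z}}A_{j}^{\frac{1-\alpha}{z}}X^{\frac{\alpha}{2z}}\le b^{\frac{1-\alpha}{z}}X^{\frac{\alpha}{z}}$. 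Now $t\mapsto t^{z}$ is operator monotone on $\mathbb{P}_{m}$ because $z\in(0,1)$ (Löwner–Heinz), so $\left(X^{\frac{\alpha}{2z}}A_{j}^{\frac{1-\alpha}{z}}X^{\frac{\alpha}{2z}}\right)^{z}\le\left(b^{\frac{1-\alpha}{z}}X^{\frac{\alpha}{z}}\right)^{z}=b^{1-\alpha}X^{\alpha}$, pulling the scalar factor out of the power. Taking the convex combination against $\omega$ yields $X\le b^{1-\alpha}X^{\alpha}$. Since $X$ and $X^{\alpha}$ are simultaneously diagonalizable, this is the scalar inequality $\lambda\le b^{1-\alpha}\lambda^{\alpha}$ for every eigenvalue $\lambda$ of $X$, i.e. $\lambda^{1-\alpha}\le b^{1-\alpha}$, hence $\lambda\le b$ and $X\le bI$.

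The lower bound is the mirror image: $A_{j}\ge aI$ gives $A_{j}^{\frac{1-\alpha}{z}}\ge a^{\frac{1-\alpha}{z}}I$, conjugation by $X^{\frac{\alpha}{2z}}$ and operator monotonicity of $t\mapsto t^{z}$ give $\left(X^{\frac{\alpha}{2z}}A_{j}^{\frac{1-\alpha}{z}}X^{\frac{\alpha}{2z}}\right)^{z}\ge a^{1-\alpha}X^{\alpha}$, and summing produces $X\ge a^{1-\alpha}X^{\alpha}$, which forces $\lambda^{1-\alpha}\ge a^{1-\alpha}$, hence $\lambda\ge a$ for every eigenvalue $\lambda$ of $X$, i.e. $X\ge aI$.

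I do not expect any serious obstacle. The only points needing care are bookkeeping about which exponents are genuinely operator monotone (only $t\mapsto t^{z}$, via $z<1$) versus merely monotone on eigenvalues (the exponent $(1-\alpha)/z$, which is harmless as it is applied to the scalar matrix $bI$ or $aI$), and the elementary reduction of the matrix inequality $X\le b^{1-\alpha}X^{\alpha}$ to a scalar one via simultaneous diagonalizability of $X$ and its powers. The same argument can alternatively be run from \eqref{E:matrix equation}, using monotonicity of the weighted geometric mean $\#_{z}$ in its second argument together with the identity $X^{-\frac{\alpha}{z}}\#_{z}(c I)=c^{z}X^{-\frac{\alpha(1-z)}{z}}$ for a scalar $c>0$, arriving at the same spectral conclusion.
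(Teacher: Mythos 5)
Your proof is correct and is essentially the paper's argument: the paper pushes the two-sided bound $aI \le A_j \le bI$ through the equivalent fixed-point equation $X^{1-\frac{\alpha}{z}} = \sum_j w_j\, X^{-\frac{\alpha}{z}} \#_z A_j^{\frac{1-\alpha}{z}}$ using monotonicity of $\#_z$ (which is just L\"{o}wner--Heinz in disguise) and then removes the $X$-powers by a congruence, exactly the alternative you sketch in your final sentence. Your direct route via \eqref{E:Renyi}, ending in the scalar inequality $\lambda \le b^{1-\alpha}\lambda^{\alpha}$ on eigenvalues of $X$, is the same computation in a marginally different packaging.
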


\begin{proof}
Let $X = \mathcal{R}_{\alpha, z}(\omega; \mathbb{A})$ for $0 < \alpha \leq z < 1$. Assume that $a I \leq A_{j} \leq b I$ for all $j$ and some $0 < a \leq b$. Then $a^{\frac{1-\alpha}{z}} I \leq A_{j}^{\frac{1-\alpha}{z}} \leq b^{\frac{1-\alpha}{z}} I$. By the monotonicity of two-variable weighted geometric mean
\begin{displaymath}
a^{1-\alpha} X^{-\frac{(1-z) \alpha}{z}} \leq X^{- \frac{\alpha}{z}} \#_{z} A_{j}^{\frac{1 - \alpha}{z}} \leq b^{1-\alpha} X^{-\frac{(1-z) \alpha}{z}}.
\end{displaymath}
Summing up for all $j$ and applying Lemma \ref{L:equation} yield $a^{1-\alpha} X^{-\frac{(1-z) \alpha}{z}} \leq X^{1 - \frac{\alpha}{z}} \leq b^{1-\alpha} X^{-\frac{(1-z) \alpha}{z}}$. Taking the congruence transformation by $X^{\frac{(1-z) \alpha}{2z}}$ and simplifying the powers we obtain the conclusion.
\end{proof}

\section{Operator inequalities with matrix power means}

Let $\mathbb{A} = (A_{1}, \dots, A_{n}) \in \mathbb{P}_{m}^{n}$ and let $\omega = (w_{1}, \dots, w_{n}) \in \Delta_{n}$. The matrix power mean $P_{t}(\omega; \mathbb{A})$ for $t \in (0,1]$ is defined in \cite{LP} as the unique solution $X \in \mathbb{P}_{m}$ of the following equation
\begin{displaymath}
X = \sum_{j=1}^{n} w_{j} X \#_{t} A_{j}.
\end{displaymath}
Indeed, the map $g: \mathbb{P}_{m} \to \mathbb{P}_{m}, \displaystyle  g(X) = \sum_{j=1}^{n} w_{j} X \#_{t} A_{j}$ for $t \in (0,1]$ is an operator monotone function and a strict contraction for the Thompson metric $d_{T}(A, B) = \Vert \log A^{-1/2} B A^{-1/2} \Vert$, where $\Vert \cdot \Vert$ denotes the operator norm. Therefore, by the Banach fixed point theorem
\begin{displaymath}
\lim_{k \to \infty} g^{k}(Z) = P_{t}(\omega; \mathbb{A}) \hspace{1cm} \textrm{ for any } Z \in \mathbb{P}_{m}.
\end{displaymath}
For $t \in [-1,0)$ we define $P_{t}(\omega; \mathbb{A}) = P_{-t}(\omega; \mathbb{A}^{-1})^{-1}$. Note that
\begin{displaymath}
\begin{split}
P_{1}(\omega; \mathbb{A}) & = \sum_{j=1}^{n} w_{j} A_{j} = \mathcal{A}(\omega; \mathbb{A}), \\
P_{-1}(\omega; \mathbb{A}) & = \left[ \sum_{j=1}^{n} w_{j} A_{j}^{-1} \right]^{-1} = \mathcal{H}(\omega; \mathbb{A})
\end{split}
\end{displaymath}
are the weighted arithmetic and harmonic means, respectively. The most remarkable consequence of matrix power means is that matrix power means $P_{t}(\omega; A_{1}, \dots, A_{n})$ converges to the Cartan mean $\Lambda(\omega; A_{1}, \dots, A_{n})$ as $t \to 0$. This plays an important role to construct the Karcher mean of positive invertible operators: see \cite{LL14}. Furthermore, the power mean interpolates monotonically the weighted arithmetic, Cartan, and harmonic means in the sense that for $0 \leq s \leq t \leq 1$
\begin{equation} \label{E:para-monotonicity}
\mathcal{H} = P_{-1} \leq P_{-t} \leq P_{-s} \leq \cdots \leq \Lambda \leq \cdots \leq P_{s} \leq P_{t} \leq P_{1} = \mathcal{A}.
\end{equation}

\begin{theorem} \label{T:A-R ineq}
For $\frac{1}{2} \leq z \leq 1$,
\begin{displaymath}
\mathcal{R}_{\alpha, z}(\omega; \mathbb{A})^{\frac{1 - \alpha}{z}} \leq \mathcal{A}(\omega; \mathbb{A}^{\frac{1 - \alpha}{z}}).
\end{displaymath}
\end{theorem}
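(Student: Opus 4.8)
The plan is to feed the fixed-point equation \eqref{E:Renyi} for the right mean into the operator Jensen inequality for the power function $t\mapsto t^{1/z}$. This is exactly where the hypothesis $z\ge\tfrac12$ is needed: the function $t\mapsto t^{p}$ is operator convex on $(0,\infty)$ precisely for $p\in[-1,0]\cup[1,2]$, so asking $p=1/z\in[1,2]$ is the same as asking $z\in[\tfrac12,1]$. (In the commuting case only $z\le 1$ is needed, since there ordinary convexity of $x\mapsto x^{1/z}$ already does the job; the stronger restriction $z\ge\tfrac12$ is what it costs to upgrade scalar convexity to operator convexity.)

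Concretely, I would put $X=\mathcal R_{\alpha,z}(\omega;\mathbb A)$ and
\[
C_j:=X^{\frac{\alpha}{2z}}A_j^{\frac{1-\alpha}{z}}X^{\frac{\alpha}{2z}}\in\mathbb P_m ,\qquad j=1,\dots,n,
\]
so that \eqref{E:Renyi} reads $X=\sum_{j=1}^{n}w_j\,C_j^{\,z}$ (equivalently this is \eqref{E:matrix equation} after conjugating back by $X^{\alpha/2z}$). Since $f(t)=t^{1/z}$ is operator convex and the $C_j^{\,z}$ are positive definite, the operator Jensen inequality applied to $C_1^{\,z},\dots,C_n^{\,z}$ with the probability weights $w_j$ gives
\[
X^{1/z}=f\!\Big(\sum_{j=1}^{n}w_j C_j^{\,z}\Big)\le\sum_{j=1}^{n}w_j\,f\big(C_j^{\,z}\big)=\sum_{j=1}^{n}w_j\,C_j=X^{\frac{\alpha}{2z}}\Big(\sum_{j=1}^{n}w_j A_j^{\frac{1-\alpha}{z}}\Big)X^{\frac{\alpha}{2z}},
\]
where $\big(C_j^{\,z}\big)^{1/z}=C_j$ uses that $C_j>0$. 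Conjugating both sides by $X^{-\frac{\alpha}{2z}}$ and using $\tfrac1z-\tfrac{\alpha}{z}=\tfrac{1-\alpha}{z}$ then yields
\[
X^{\frac{1-\alpha}{z}}\le\sum_{j=1}^{n}w_j A_j^{\frac{1-\alpha}{z}}=\mathcal A\big(\omega;\mathbb A^{\frac{1-\alpha}{z}}\big),
\]
which is the assertion.

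The main obstacle is really just recognizing that this is the right route. The tempting moves — bounding each $C_j^{\,z}$ from above and then extracting $z$-th roots, or applying the arithmetic-geometric mean inequality $X^{-\alpha/z}\#_z A_j^{(1-\alpha)/z}\le(1-z)X^{-\alpha/z}+zA_j^{(1-\alpha)/z}$ to \eqref{E:matrix equation} — both break down because raising an operator inequality to a power $1/z>1$ is not order preserving. The trick is that $t\mapsto t^{1/z}$ is operator \emph{convex} for $z\ge\tfrac12$, so it can be pushed across the weighted sum $\sum_j w_j C_j^{\,z}$, which by \eqref{E:Renyi} is exactly $X$, rather than across a single summand. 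Once that is seen, the rest is bookkeeping with exponents; the only point worth a second glance is the endpoint $z=1$, where \eqref{E:Renyi} already forces $X^{1-\alpha}=\sum_j w_j A_j^{1-\alpha}$ with equality.
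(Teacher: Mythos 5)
Your proof is correct and follows essentially the same route as the paper's: both apply the $1/z$-th power to the fixed-point equation $X=\sum_j w_j\bigl(X^{\frac{\alpha}{2z}}A_j^{\frac{1-\alpha}{z}}X^{\frac{\alpha}{2z}}\bigr)^{z}$, use the operator convexity of $t\mapsto t^{1/z}$ for $1/z\in[1,2]$ (which is exactly where $z\ge\tfrac12$ enters), and then conjugate by $X^{-\frac{\alpha}{2z}}$. Your write-up merely makes explicit, via the operator Jensen inequality, the Jensen step that the paper cites as ``convexity of $A\mapsto A^{r}$ for $1\le r\le 2$.''
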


\begin{proof}
Let $X = \mathcal{R}_{\alpha, z}(\omega; \mathbb{A})$ and $\frac{1}{2} \leq z \leq 1$. Then $\displaystyle X = \sum_{j=1}^{n} w_{j} \left(X^{\frac{\alpha}{2 z}} A_{j}^{\frac{1 - \alpha}{z}} X^{\frac{\alpha}{2 z}}\right)^{z}$. Taking $1/z$-power on both sides and applying the convexity of the map $A \mapsto A^{r}$ for $1 \leq r \leq 2$,
\begin{displaymath}
X^{\frac{1}{z}} = \left[ \sum_{j=1}^{n} w_{j} \left(X^{\frac{\alpha}{2 z}} A_{j}^{\frac{1 - \alpha}{z}} X^{\frac{\alpha}{2 z}}\right)^{z} \right]^{\frac{1}{z}} \leq \sum_{j=1}^{n} w_{j} X^{\frac{\alpha}{2 z}} A_{j}^{\frac{1 - \alpha}{z}} X^{\frac{\alpha}{2 z}} = X^{\frac{\alpha}{2 z}} \left( \sum_{j=1}^{n} w_{j} A_{j}^{\frac{1 - \alpha}{z}} \right) X^{\frac{\alpha}{2 z}}.
\end{displaymath}
Thus, we have $\displaystyle X^{\frac{1 - \alpha}{z}} \leq \sum_{j=1}^{n} w_{j} A_{j}^{\frac{1 - \alpha}{z}}$.
\end{proof}

%The notions of operator convexity and concavity are characterized by Jensen type inequalities in \cite{HP}. For every contraction $X$ we have
% \begin{equation}
% (X^{*} A X)^{p} \leq X^{*} A^{p} X \hspace{5mm} \textrm{if} \ \ 1 \leq p \leq 2,
% \end{equation}
% and
%\begin{equation} \label{E:Hansen}
%(X^{*} A X)^{p} \geq X^{*} A^{p} X \hspace{5mm} \textrm{if} \ \ 0 \leq p \leq 1.
%\end{equation}
%For any invertible matrix $X$ such that its inverse $X^{-1}$ is a contraction,
%\begin{equation} \label{E:Hansen_modified}
%(X^{*} A X)^{p} \leq X^{*} A^{p} X \hspace{5mm} \textrm{if} \ \ 0 \leq p \leq 1.
%\end{equation}
% Indeed, applying \eqref{E:Hansen} to the contraction $X^{-1}$ and $X^{*} A X$, we obtain
% \begin{equation*}
% A^{p} = ((X^{-1})^{*} (X^{*} A X) X^{-1})^{p} \geq (X^{-1})^{*} (X^{*} A X)^{p} X^{-1},
% \end{equation*}
% and hence, $(X^{*} A X)^{p} \leq X^{*} A^{p} X$.

\begin{theorem} \label{T:inequalities-2}
Let $0 < \alpha \leq z < 1.$ If $\mathcal{R}_{\alpha, z}(\omega; \mathbb{A}) \leq I$ then
\begin{displaymath}
\mathcal{R}_{\alpha, z}(\omega; \mathbb{A})^{1 - \frac{\alpha}{z}} \geq \mathcal{A}(\omega; \mathbb{A}^{1 - \alpha}).
\end{displaymath}
If $\mathcal{R}_{\alpha, z}(\omega; \mathbb{A}) \geq I$ then the reverse inequality holds.
\end{theorem}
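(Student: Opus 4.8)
The plan is to run everything off the fixed–point equation of Lemma~\ref{L:equation} and to compare each summand with $A_j^{1-\alpha}$ by monotonicity of the operator mean $\#_z$. Writing $X=\mathcal{R}_{\alpha,z}(\omega;\mathbb{A})$, Lemma~\ref{L:equation} gives
\[
X^{1-\frac{\alpha}{z}}=\sum_{j=1}^{n}w_{j}\,X^{-\frac{\alpha}{z}}\#_{z}A_{j}^{\frac{1-\alpha}{z}} .
\]
The first step is the observation that $\frac{\alpha}{z}\in(0,1]$ because $0<\alpha\le z<1$; hence $t\mapsto t^{-\alpha/z}$ is operator monotone decreasing on $\mathbb{P}_m$. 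Consequently $X\le I$ forces $X^{-\alpha/z}\ge I$, and $X\ge I$ forces $X^{-\alpha/z}\le I$.

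The second step uses that for $z\in(0,1)$ the weighted geometric mean $\#_z$ is a Kubo--Ando operator mean, so it is jointly monotone in its two arguments, and it satisfies the normalization $I\#_z C=C^{z}$ for all $C\in\mathbb{P}_m$. Thus, in the case $X\le I$, for every $j$ one gets
\[
X^{-\frac{\alpha}{z}}\#_{z}A_{j}^{\frac{1-\alpha}{z}}\ \ge\ I\#_{z}A_{j}^{\frac{1-\alpha}{z}}=\Bigl(A_{j}^{\frac{1-\alpha}{z}}\Bigr)^{z}=A_{j}^{1-\alpha}.
\]
Taking the $w_j$-weighted sum and substituting back the displayed equation yields $X^{1-\frac{\alpha}{z}}\ge\sum_{j=1}^{n}w_{j}A_{j}^{1-\alpha}=\mathcal{A}(\omega;\mathbb{A}^{1-\alpha})$, which is the first assertion. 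The case $X\ge I$ is handled verbatim with $X^{-\alpha/z}\le I$ and all inequalities reversed, giving $X^{1-\frac{\alpha}{z}}\le\mathcal{A}(\omega;\mathbb{A}^{1-\alpha})$.

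I do not anticipate a real obstacle: the argument is a one–line manipulation of the defining equation combined with two standard monotonicity facts. The only points needing care are (i) that the exponent $\alpha/z$ does not exceed $1$, so that $t\mapsto t^{-\alpha/z}$ is operator monotone decreasing --- this is precisely the hypothesis $\alpha\le z$ --- and (ii) the normalization identity $I\#_z C=C^{z}$ together with the power law $(A_j^{(1-\alpha)/z})^{z}=A_j^{1-\alpha}$, which lets the summands collapse onto $A_j^{1-\alpha}$. One should also note that the hypothesis is on $X=\mathcal{R}_{\alpha,z}(\omega;\mathbb{A})$ itself rather than on the data $A_j$, so no reduction to a normalized tuple is needed; by Lemma~\ref{L:properties}(2) one could, if desired, rescale to bring $X$ near $I$, but it is not necessary for the proof.
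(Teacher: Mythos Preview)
Your proof is correct and follows essentially the same route as the paper: both use the fixed-point equation of Lemma~\ref{L:equation}, deduce $X^{-\alpha/z}\ge I$ from $X\le I$, and then apply monotonicity of $\#_z$ together with $I\#_z C=C^z$ to bound each summand below by $A_j^{1-\alpha}$. One minor remark: since you are only comparing $X$ with $I$ (which commutes with $X$), the implication $X\le I\Rightarrow X^{-\alpha/z}\ge I$ follows from scalar functional calculus for any positive exponent, so the operator-monotone-decreasing property of $t\mapsto t^{-\alpha/z}$ (and hence the constraint $\alpha\le z$) is not actually needed at that step.
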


\begin{proof}
Assume $X = \mathcal{R}_{\alpha, z}(\omega; \mathbb{A}) \leq I$ for $0 < \alpha \leq z < 1$. Then $X^{-\frac{\alpha}{z}} \geq I$, so by the matrix equation \eqref{E:matrix equation}
\begin{displaymath}
X^{1-\frac{\alpha}{z}} = \sum_{j=1}^{n} w_{j} X^{-\frac{\alpha}{z}} \#_{z} A_{j}^{\frac{1 - \alpha}{z}} \geq \sum_{j=1}^{n} w_{j} I\#_{z} A_{j}^{\frac{1 - \alpha}{z}} = \sum_{j=1}^{n} w_{j} A_{j}^{1-\alpha}.
\end{displaymath}
The second inequality follows from the monotonicity of the weighted geometric mean.

For the case that $\mathcal{R}_{\alpha, z}(\omega; \mathbb{A}) \geq I$ we can prove the reverse inequality by the similar method as above.
\end{proof}

\begin{theorem} \label{T:Renyi-power}
Let $0 < \alpha \leq z < 1$. If $\mathcal{R}_{\alpha, z}(\omega; \mathbb{A}) \geq I$ then
\begin{displaymath}
\mathcal{R}_{\alpha, z}(\omega; \mathbb{A})^{1-\frac{\alpha}{z}} \leq P_{z}(\omega; \mathbb{A}^{\frac{1 - \alpha}{z}}).
\end{displaymath}
If $\mathcal{R}_{\alpha, z}(\omega; \mathbb{A}) \leq I$ then the reverse inequality holds.
\end{theorem}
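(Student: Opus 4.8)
The plan is to compare, via monotonicity, the fixed‑point equation of Lemma~\ref{L:equation} for $X := \mathcal{R}_{\alpha,z}(\omega;\mathbb{A})$ with the defining equation of the power mean $P := P_{z}(\omega; \mathbb{A}^{\frac{1-\alpha}{z}})$, and then to push the resulting one‑sided inequality through the monotone iteration $g^{k}$ that converges to $P$. Set $Y := X^{1-\frac{\alpha}{z}}$. Lemma~\ref{L:equation} reads
\begin{displaymath}
Y = \sum_{j=1}^{n} w_{j}\, X^{-\frac{\alpha}{z}} \#_{z} A_{j}^{\frac{1-\alpha}{z}},
\end{displaymath}
whereas $P$ is the unique fixed point of the operator monotone strict contraction $g(Z) = \sum_{j=1}^{n} w_{j}\, Z \#_{z} A_{j}^{\frac{1-\alpha}{z}}$, for which $g^{k}(Z) \to P$ for every $Z \in \mathbb{P}_{m}$.

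First I would record the elementary scalar fact that, because $0 < \alpha/z \leq 1$, one has $t^{-\alpha/z} \leq t^{1-\alpha/z}$ for $t \geq 1$ and the reverse for $0 < t \leq 1$ (both reduce to comparing $1$ with $t$). Applying the functional calculus to $X$, the hypothesis $\mathcal{R}_{\alpha,z}(\omega;\mathbb{A}) = X \geq I$ gives $X^{-\frac{\alpha}{z}} \leq X^{1-\frac{\alpha}{z}} = Y$. Monotonicity of the weighted geometric mean $\#_{z}$ in its first argument then yields $X^{-\frac{\alpha}{z}} \#_{z} A_{j}^{\frac{1-\alpha}{z}} \leq Y \#_{z} A_{j}^{\frac{1-\alpha}{z}}$ for each $j$; summing against $\omega$ and using the displayed equation for $Y$,
\begin{displaymath}
Y \leq \sum_{j=1}^{n} w_{j}\, Y \#_{z} A_{j}^{\frac{1-\alpha}{z}} = g(Y).
\end{displaymath}

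Next I would iterate. Since $g$ is operator monotone, $Y \leq g(Y)$ propagates to $g^{k}(Y) \leq g^{k+1}(Y)$ for all $k$, so $\{g^{k}(Y)\}$ is nondecreasing; as it converges to $P$, taking the limit in $Y \leq g^{k}(Y)$ gives $Y \leq P$, which is exactly $\mathcal{R}_{\alpha,z}(\omega;\mathbb{A})^{1-\frac{\alpha}{z}} \leq P_{z}(\omega;\mathbb{A}^{\frac{1-\alpha}{z}})$. If instead $\mathcal{R}_{\alpha,z}(\omega;\mathbb{A}) \leq I$, then $X^{-\frac{\alpha}{z}} \geq Y$, every inequality above reverses, the iterates $g^{k}(Y)$ decrease to $P$, and we obtain $Y \geq P$.

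I do not expect a genuine obstacle here: once one observes that the right‑mean equation carries $X^{-\alpha/z}$ in the first slot of the geometric mean while the power‑mean equation carries $Y = X^{1-\alpha/z}$, the comparison of these two terms is governed precisely by whether $X$ lies above or below $I$, which is why the theorem is stated under exactly that dichotomy. The only mild point to check is that the monotone sequence of iterates really converges, and this is the Banach fixed point statement already recorded for $g$; the degenerate case $\alpha = z$, where $Y = I$, is handled by the same chain of inequalities without change.
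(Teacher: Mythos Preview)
Your argument is correct and follows essentially the same route as the paper: both proofs use the fixed-point equation from Lemma~\ref{L:equation}, compare the two powers $\mathcal{R}_{\alpha,z}(\omega;\mathbb{A})^{-\alpha/z}$ and $\mathcal{R}_{\alpha,z}(\omega;\mathbb{A})^{1-\alpha/z}$ under the hypothesis $\mathcal{R}_{\alpha,z}(\omega;\mathbb{A}) \gtrless I$, and then push the resulting one-sided inequality through the monotone, Thompson-contractive iteration defining $P_{z}$. The only cosmetic difference is that the paper sets its iteration variable to $\mathcal{R}_{\alpha,z}(\omega;\mathbb{A})^{-\alpha/z}$ while you start one step later at $Y = \mathcal{R}_{\alpha,z}(\omega;\mathbb{A})^{1-\alpha/z}$; the underlying inequality and the iteration are identical.
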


\begin{proof}
Assume that $X = \mathcal{R}_{\alpha, z}(\omega; \mathbb{A})^{-\frac{\alpha}{z}}$ for $0 < \alpha \leq z < 1$. Then $X^{-\frac{z}{\alpha}}$ satisfies the following equation from Lemma \ref{L:equation}
\begin{displaymath}
X^{1 - \frac{z}{\alpha}} = \sum_{j=1}^{n} w_{j} X \#_{z} A_{j}^{\frac{1 - \alpha}{z}}.
\end{displaymath}
By assumption $X \leq I$, so $X \leq X^{1 - \frac{z}{\alpha}}$.
Thus we get
\begin{displaymath}
X \leq X^{1 - \frac{z}{\alpha}} = \sum_{j=1}^{n} w_{j} X \#_{z} A_{j}^{\frac{1 - \alpha}{z}} =: f(X).
\end{displaymath}
Since the map $f$ is operator monotone, $X \leq f(X) \leq f^{2}(X) \leq \cdots \leq f^{k}(X)$ for all $k \in \mathbb{N}$. Since $f^{k}(X)$ converges to the power mean $P_{z}(\omega; \mathbb{A}^{\frac{1 - \alpha}{z}})$ as $k \to \infty$, we obtain the desired inequality.

By the similar argument for $\mathcal{R}_{\alpha, z}(\omega; \mathbb{A}) \leq I$, we can prove that the reverse inequality is satisfied.
\end{proof}

\begin{remark}
Note that $A^{1 - \alpha}$ and $A^{\frac{1 - \alpha}{z}}$ for any $A \in \mathbb{P}_{m}$ and $0 < \alpha \leq z < 1$ can not be compared in general, but only when $A \geq I$ or $A \leq I$. It means that Theorem \ref{T:inequalities-2} and Theorem \ref{T:Renyi-power} are different results.
\end{remark}

The theory of majorization and log-majorization plays an important role in matrix inequalities of eigenvalues, singular values and matrix norm. Let $\mathbf{x} = (x_{1}, \dots, x_{m})$ and $\mathbf{y} = (y_{1}, \dots, y_{m})$ be vectors in $\mathbb{R}^{m}$. We denote as $x_{1}^{\downarrow} \geq \cdots \geq x_{m}^{\downarrow}$ the coordinates of $\mathbf{x}$ arranged in decreasing order. If
\begin{equation} \label{E:majorization}
\sum_{j=1}^{k} x_{j}^{\downarrow} \leq \sum_{j=1}^{k} y_{j}^{\downarrow}
\end{equation}
for all $k = 1, \dots, m$ then we say that $\mathbf{x}$ is weakly majorized by $\mathbf{y}$ and write as $\mathbf{x} \prec_{w} \mathbf{y}$. Additionally if the equality of \eqref{E:majorization} holds for $k = m$, then we say that $\mathbf{x}$ is majorized by $\mathbf{y}$ and write as $\mathbf{x} \prec \mathbf{y}$.

Assume that $\mathbf{x} = (x_{1}, \dots, x_{m})$ and $\mathbf{y} = (y_{1}, \dots, y_{m})$ are vectors with positive entries. We say that $\mathbf{x}$ is weakly log-majorized by $\mathbf{y}$, written as $\mathbf{x} \prec_{w \log} \mathbf{y}$, if
\begin{equation} \label{E:log_majorization}
\prod_{j=1}^{k} x_{j}^{\downarrow} \leq \prod_{j=1}^{k} y_{j}^{\downarrow}
\end{equation}
for all $k = 1, \dots, m$. We say that $\mathbf{x}$ is log-majorized by $\mathbf{y}$, written as $\mathbf{x} \prec_{\log} \mathbf{y}$, additionally if the equality of \eqref{E:log_majorization} holds for $k = m$. One can see that $\mathbf{x} \prec_{(w) \log} \mathbf{y}$ if and only if $\log \mathbf{x} \prec_{(w)} \log \mathbf{y}$, where $\log \mathbf{x} := (\log x_{1}, \dots, \log x_{m})$. It has been known from \cite[Theorem 10.15]{Zh} that $\mathbf{x} \prec_{w \log} \mathbf{y}$ implies $\mathbf{x} \prec_{w} \mathbf{y}$.

\begin{corollary} \label{C:log-majorization}
For $0 < \alpha \leq z < 1$,
\begin{displaymath}
\begin{split}
\lambda( \mathcal{A}(\omega; \mathbb{A}^{1 - \alpha} ) & \prec_{w \log} \lambda( \mathcal{R}_{\alpha, z}(\omega; \mathbb{A}) ), \\
\lambda( P_{z}(\omega; \mathbb{A}^{\frac{1 - \alpha}{z}}) ) & \prec_{w \log} \lambda( \mathcal{R}_{\alpha, z}(\omega; \mathbb{A}) ),
\end{split}
\end{displaymath}
where $\lambda(A)$ denotes the $m$-tuple of eigenvalues of an $m \times m$ matrix $A$. Furthermore, $\lambda( \Lambda(\omega; \mathbb{A}^{\frac{1 - \alpha}{z}}) ) \prec_{w \log} \lambda( \mathcal{R}_{\alpha, z}(\omega; \mathbb{A}) )$.
\end{corollary}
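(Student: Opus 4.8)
The plan is to deduce the two displayed weak log-majorizations from the operator inequalities already proved in Theorems \ref{T:inequalities-2} and \ref{T:Renyi-power}, and then to obtain the final Cartan statement purely by transitivity. The mechanism that turns an operator inequality into a weak log-majorization is Weyl's monotonicity principle: if $0 < S \le T$ in $\mathbb{P}_m$, then $\lambda_j^{\downarrow}(S) \le \lambda_j^{\downarrow}(T)$ for every $j$, and multiplying these positive termwise inequalities gives $\prod_{j=1}^{k} \lambda_j^{\downarrow}(S) \le \prod_{j=1}^{k} \lambda_j^{\downarrow}(T)$ for all $k$, that is $\lambda(S) \prec_{w \log} \lambda(T)$. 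Feeding the inequalities of Theorems \ref{T:inequalities-2} and \ref{T:Renyi-power} into this principle produces, on the respective branches $\mathcal{R}_{\alpha,z}(\omega;\mathbb{A}) \ge I$ and $\mathcal{R}_{\alpha,z}(\omega;\mathbb{A}) \le I$, comparisons of $\mathcal{A}(\omega;\mathbb{A}^{1-\alpha})$ and of $P_{z}(\omega;\mathbb{A}^{\frac{1-\alpha}{z}})$ against the power $\mathcal{R}_{\alpha,z}(\omega;\mathbb{A})^{1-\frac{\alpha}{z}}$.

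Writing $X = \mathcal{R}_{\alpha,z}(\omega;\mathbb{A})$, I would next record that the spectral calculus commutes with the decreasing rearrangement for the nonnegative exponent $1-\frac{\alpha}{z} \in [0,1)$, so that $\lambda_j^{\downarrow}(X^{1-\frac{\alpha}{z}}) = (\lambda_j^{\downarrow}(X))^{1-\frac{\alpha}{z}}$. The remaining task is then to move the exponent off $X$, i.e. to pass from $(\lambda_j^{\downarrow}(X))^{1-\frac{\alpha}{z}}$ to $\lambda_j^{\downarrow}(X)$ inside the majorization. For this I would use the same spectral dichotomy in which the two theorems are stated: on $\{X \ge I\}$ the scalar map $t \mapsto t^{1-\frac{\alpha}{z}}$ lies below the identity on $[1,\infty)$, so $(\lambda_j^{\downarrow}(X))^{1-\frac{\alpha}{z}} \le \lambda_j^{\downarrow}(X)$, while on $\{X \le I\}$ the reverse pointwise comparison holds on $(0,1]$; in each case one composes the correctly oriented comparison with the matching branch of Theorem \ref{T:inequalities-2} or \ref{T:Renyi-power} to upgrade the power-level majorization to one against $\lambda(X)$ itself.

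The final Cartan assertion then costs nothing extra. By the monotonicity \eqref{E:para-monotonicity} of the power means one has $\Lambda(\omega;\mathbb{A}^{\frac{1-\alpha}{z}}) \le P_{z}(\omega;\mathbb{A}^{\frac{1-\alpha}{z}})$, so Weyl's principle gives $\lambda(\Lambda(\omega;\mathbb{A}^{\frac{1-\alpha}{z}})) \prec_{w \log} \lambda(P_{z}(\omega;\mathbb{A}^{\frac{1-\alpha}{z}}))$, and transitivity of $\prec_{w \log}$ composes this with the second displayed majorization to reach $\lambda(\Lambda(\omega;\mathbb{A}^{\frac{1-\alpha}{z}})) \prec_{w \log} \lambda(X)$. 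For the top level $k=m$ I would separately invoke Lemma \ref{L:properties}(6), namely $\det X \ge \prod_{j=1}^{n} (\det A_j)^{w_j}$, to control the full product $\prod_{j=1}^{m} \lambda_j(X) = \det X$.

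The step I expect to be the main obstacle is precisely the exponent transfer of the second paragraph. The pointwise comparison between $(\lambda_j^{\downarrow}(X))^{1-\frac{\alpha}{z}}$ and $\lambda_j^{\downarrow}(X)$ has a definite sign only when the spectrum of $X$ lies entirely on one side of $1$, so the genuinely delicate situation is a mean $X$ whose spectrum straddles $1$, where the termwise argument breaks and one is forced to reason at the level of the partial products $\prod_{j\le k}\lambda_j^{\downarrow}(X)$ rather than entry by entry. Reconciling the power $1-\frac{\alpha}{z}$ carried by Theorems \ref{T:inequalities-2} and \ref{T:Renyi-power} with the exponent-free right-hand side of the corollary is therefore the crux, and it is here that the normalization furnished by Lemma \ref{L:properties}(6) together with a careful handling of the two spectral branches must be combined.
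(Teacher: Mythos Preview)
Your route differs from the paper's: instead of working eigenvalue by eigenvalue via Weyl's principle, the paper argues that since $\mathcal{A}$, $P_z$ and $\mathcal{R}_{\alpha,z}$ are homogeneous it suffices to check the implication ``$\mathcal{R}_{\alpha,z}(\omega;\mathbb{A})\le I \Longrightarrow \mathcal{A}(\omega;\mathbb{A}^{1-\alpha})\le I$'' (and its $P_z$ analogue), which does follow from Theorems~\ref{T:inequalities-2} and~\ref{T:Renyi-power}; the Cartan line is then obtained, as you also do, from $\Lambda\le P_z$ in \eqref{E:para-monotonicity}.

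There is, however, a genuine gap at precisely the point you flag, and it cannot be closed: the corollary as stated is false. Take $n=1$ and $\mathbb{A}=(cI)$ with $0<c<1$. Then $\mathcal{R}_{\alpha,z}(\omega;\mathbb{A})=cI$, while $\mathcal{A}(\omega;\mathbb{A}^{1-\alpha})=c^{1-\alpha}I$ and $P_z(\omega;\mathbb{A}^{(1-\alpha)/z})=\Lambda(\omega;\mathbb{A}^{(1-\alpha)/z})=c^{(1-\alpha)/z}I$; since $c^{1-\alpha}>c$ (and $c^{(1-\alpha)/z}>c$ whenever $z>1-\alpha$), the claimed weak log-majorizations already fail at $k=1$. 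The obstruction is exactly the exponent mismatch you isolated: the two sides of each line have \emph{different} homogeneity degrees in $\mathbb{A}$ (degree $1-\alpha$ or $(1-\alpha)/z$ on the left, degree $1$ on the right), so no scale-invariant relation such as $\prec_{w\log}$ can hold between them without a compensating power. Your proposed rescue via Lemma~\ref{L:properties}(6) and branch-splitting therefore cannot succeed, and the paper's homogeneity reduction is invalid for the same reason, since that device requires the two sides to scale identically. What Theorems~\ref{T:inequalities-2} and~\ref{T:Renyi-power} legitimately deliver on the branch $X\le I$ is only the exponent-matched comparison $\mathcal{A}(\omega;\mathbb{A}^{1-\alpha})\le X^{1-\alpha/z}$ (resp.\ $P_z(\omega;\mathbb{A}^{(1-\alpha)/z})\le X^{1-\alpha/z}$), i.e.\ a bound against $\lambda\bigl(\mathcal{R}_{\alpha,z}(\omega;\mathbb{A})^{\,1-\alpha/z}\bigr)$ rather than against $\lambda\bigl(\mathcal{R}_{\alpha,z}(\omega;\mathbb{A})\bigr)$ itself.
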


\begin{proof}
Note that the weighted arithmetic mean $\mathcal{A}$ and power mean $P_{z}$ are obviously homogeneous, and the $\alpha-z$ weighted right mean $\mathcal{R}_{\alpha, z}$ is also homogeneous from Lemma \ref{L:properties} (2). So it suffices to show that
\begin{center}
$\mathcal{A}(\omega; \mathbb{A}^{1 - \alpha}) \leq I$ \ and \ $P_{z}(\omega; \mathbb{A}^{\frac{1 - \alpha}{z}}) \leq I$,
\end{center}
whenever $\mathcal{R}_{\alpha, z}(\omega; \mathbb{A}) \leq I$. From Theorem \ref{T:inequalities-2} and Theorem \ref{T:Renyi-power} we obtain the main consequences.

Since $\Lambda \leq P_{z}$ for $0 < z < 1$ from \eqref{E:para-monotonicity}, we obtain the second assertion.
\end{proof}

\begin{remark}
The matrix norm $||| \cdot |||$ on $M_{m}$ is said to be unitarily invariant if $||| U A V ||| = ||| A |||$ for any matrix $A \in M_{m}$ and unitary matrices $U, V$. There is a crucial relation between the weak majorization and unitarily invariant norm of matrices. Precisely for any $A, B \in M_{m}$,
\begin{center}
$\sigma(A) \prec_{w} \sigma(B)$ \ if and only if \ $||| A ||| \leq ||| B |||$
\end{center}
for any unitarily invariant matrix norm $||| \cdot |||$, where $\sigma(A) = (\sigma_{1}(A), \dots, \sigma_{m}(A))$ denotes the $m$-tuple of singular values of $A$. Since $\lambda(A) = \sigma(A)$ for any $A \in \mathbb{P}_{m}$, we have from \cite[Theorem 10.15, Theorem 10.38]{Zh} and Corollary \ref{C:log-majorization}
\begin{center}
$\displaystyle ||| \mathcal{A}(\omega; \mathbb{A}^{1 - \alpha}) ||| \leq ||| \mathcal{R}_{\alpha, z}(\omega; \mathbb{A}) |||$ \ and \ $\displaystyle ||| P_{z}(\omega; \mathbb{A}^{\frac{1 - \alpha}{z}}) ||| \leq ||| \mathcal{R}_{\alpha, z}(\omega; \mathbb{A}) |||$.
\end{center}
\end{remark}

\begin{remark}
Let $\mathbb{A} = (A_{1}, \dots, A_{n}) \in \mathbb{P}_{m}^{n}$, and let $0 < \alpha \leq z < 1$. Then there exist positive scalars $a, b$ such that $a I \leq A_{j} \leq b I$ for all $j$. So $b^{-1} A_{j} \leq I$ implies $\mathcal{R}_{\alpha, z}(\omega; b^{-1} \mathbb{A}) \leq I$ by Lemma \ref{L:boundedness}, and similarly, $a^{-1} A_{j} \geq I$ implies $\mathcal{R}_{\alpha, z}(\omega; a^{-1} \mathbb{A}) \geq I$. Thus, we obtain the modified consequences of Theorem \ref{T:inequalities-2} and Theorem \ref{T:Renyi-power} as follows:
\begin{displaymath}
\begin{split}
b^{\alpha(1 - \frac{1}{z})} \mathcal{A}(\omega; \mathbb{A}^{1 - \alpha}) \leq & \, \mathcal{R}_{\alpha, z}(\omega; \mathbb{A})^{1 - \frac{\alpha}{z}} \leq a^{\alpha (1 - \frac{1}{z})} \mathcal{A}(\omega; \mathbb{A}^{1 - \alpha}), \\
b^{1 - \frac{1}{z}} P_{z}(\omega; \mathbb{A}^{\frac{1 - \alpha}{z}}) \leq & \, \mathcal{R}_{\alpha, z}(\omega; \mathbb{A})^{1 - \frac{\alpha}{z}} \leq a^{1 - \frac{1}{z}} P_{z}(\omega; \mathbb{A}^{\frac{1 - \alpha}{z}}).
\end{split}
\end{displaymath}
By the monotonicity of matrix power means for parameters in \eqref{E:para-monotonicity} we obtain
\begin{displaymath}
b^{1 - \frac{1}{z}} \mathcal{H}(\omega; \mathbb{A}^{\frac{1 - \alpha}{z}}) \leq \mathcal{R}_{\alpha, z}(\omega; \mathbb{A})^{1 - \frac{\alpha}{z}} \leq a^{1 - \frac{1}{z}} \mathcal{A}(\omega; \mathbb{A}^{\frac{1 - \alpha}{z}}).
\end{displaymath}
\end{remark}

\section{Trace inequality with Wasserstein mean}

From the $L_{2}$-Wasserstein distance of Gaussian distributions with mean zero and covariance matrices $A, B \in \mathbb{P}_{m}$, a new metric on $\mathbb{P}_{m}$ has been introduced \cite{BJL}:
\begin{displaymath}
d_{W}(A, B) = \left[ \tr \left( \frac{A + B}{2} \right) - \tr (A^{1/2} B A^{1/2})^{1/2} \right]^{1/2}.
\end{displaymath}
This is called the Bures-Wasserstein metric, and note that $d_{W}(A, B) = \Phi_{\frac{1}{2}, \frac{1}{2}} (A, B)$. It coincides with the Bures distance of density matrices in quantum information theory and is the matrix version of Hellinger distance.

Let $\mathbb{A} = (A_{1}, \dots, A_{n}) \in \mathbb{P}_{m}^{n}$, and let $\omega = (w_{1}, \dots, w_{n}) \in \Delta_{n}$. We consider the following minimization problem
\begin{equation} \label{E:minimization-Wass}
\underset{X \in \mathbb{P}_{m}}{\arg \min} \sum_{j=1}^{n} w_{j} d_{W}^{2}(X, A_{j}).
\end{equation}
By using tools in non-smooth analysis, convex duality, and optimal transport theory, it has been proved in \cite[Theorem 6.1]{AC} that the above minimization problem \eqref{E:minimization-Wass} has a unique solution in $\mathbb{P}_{m}$. On the other hand, it has been shown in \cite{BJL} that the objective function $\displaystyle g(X) = \sum_{j=1}^{n} w_{j} d_{W}^{2}(X, A_{j})$ is strictly convex on $\mathbb{P}_{m}$, by applying the strict concavity of the map $h: \mathbb{P}_{m} \to \mathbb{R}, \ h(X) = \tr (X^{1/2})$. Therefore, we define the \emph{Wasserstein mean} $\Omega(\omega; \mathbb{A})$ as such a unique minimizer of \eqref{E:minimization-Wass}. Note from the definition of the $\alpha-z$ weighted right mean that
\begin{displaymath}
\Omega(\omega; \mathbb{A}) = \underset{X \in \mathbb{P}_{m}}{\arg \min} \sum_{j=1}^{n} w_{j} d_{W}^{2}(X, A_{j}) = \mathcal{R}_{\frac{1}{2}, \frac{1}{2}}(\omega; \mathbb{A}).
\end{displaymath}

An iteration approach to the Wasserstein mean $\Omega(\omega; \mathbb{A})$ has been recently shown in \cite{ABCM} by using the map $K: \mathbb{P}_{m} \to \mathbb{P}_{m}$ defined as
\begin{equation} \label{E:Wass}
K(A) = A^{-1/2} \left[ \sum_{j=1}^{n} w_{j} \left(A^{1/2} A_{j} A^{1/2}\right)^{1/2} \right]^{2} A^{-1/2}
\end{equation}
for each $A \in \mathbb{P}_{m}$.

\begin{theorem} \cite{ABCM} \label{T:iteration}
Let $\omega \in \Delta_{n}$ and $\mathbb{A} \in \mathbb{P}_{m}^{n}$. For every $S_{0} \in \mathbb{P}_{m}$ the sequence $S_{r+1} = K(S_{r})$ constructed iteratively from the map $K$ in \eqref{E:Wass} converges to $\Omega(\omega; \mathbb{A})$, and for all natural numbers $r$
\begin{displaymath}
\tr S_{r} \leq \tr S_{r+1} \leq \tr \Omega(\omega; \mathbb{A}).
\end{displaymath}
\end{theorem}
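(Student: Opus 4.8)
The plan is to realize $K$ as the optimal-transport fixed-point iteration for Gaussian barycenters. Write $\mu_P:=N(0,P)$ for $P\in\mathbb{P}_m$, and recall that the optimal transport map sending $\mu_P$ to $\mu_{A_j}$ is the linear map with matrix $T_{P,j}:=P^{-1/2}(P^{1/2}A_jP^{1/2})^{1/2}P^{-1/2}\in\mathbb{P}_m$, which satisfies $T_{P,j}\,P\,T_{P,j}=A_j$. Set $M_P:=\sum_{j=1}^n w_j(P^{1/2}A_jP^{1/2})^{1/2}$ and $\bar T_P:=\sum_{j=1}^n w_jT_{P,j}=P^{-1/2}M_PP^{-1/2}\in\mathbb{P}_m$; then $\bar T_P\,P\,\bar T_P=P^{-1/2}M_P^2P^{-1/2}=K(P)$, so that $(\bar T_P)_{\#}\mu_P=\mu_{K(P)}$. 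The fixed points of $K$ are exactly the solutions of $X=\sum_j w_j(X^{1/2}A_jX^{1/2})^{1/2}$, i.e.\ of \eqref{E:Renyi} with $\alpha=z=\tfrac12$; hence $K$ has the single fixed point $\Omega:=\Omega(\omega;\mathbb{A})=\mathcal{R}_{1/2,1/2}(\omega;\mathbb{A})$, at which $M_\Omega=\Omega$. Throughout I use the $L^2$-Wasserstein distance $\mathcal{W}$ on centered Gaussians, for which $\mathcal{W}^2(\mu_A,\mu_B)=\tr A+\tr B-2\tr(A^{1/2}BA^{1/2})^{1/2}=2\,d_W^2(A,B)$; the factor $2$ plays no role below.

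\emph{Step 1 (the transport inequality $\tr K(P)\le\tr M_{K(P)}$).} Since $\bar T_P$ is invertible, $R_j:=T_{P,j}\circ\bar T_P^{-1}$ is an admissible (generally non-optimal) transport map from $\mu_{K(P)}$ to $\mu_{A_j}$, so by the change of variables $y=\bar T_Px$,
\[
\sum_j w_j\,\mathcal{W}^2(\mu_{K(P)},\mu_{A_j})\le\sum_j w_j\int\|\bar T_Px-T_{P,j}x\|^2\,d\mu_P(x).
\]
Because $\sum_j w_jT_{P,j}=\bar T_P$, the mixed terms cancel and the right-hand side equals $\sum_j w_j\int\|T_{P,j}x\|^2d\mu_P-\int\|\bar T_Px\|^2d\mu_P=\sum_j w_j\tr(T_{P,j}PT_{P,j})-\tr(\bar T_PP\bar T_P)=\sum_j w_j\tr A_j-\tr K(P)$. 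Expanding $\mathcal{W}^2(\mu_{K(P)},\mu_{A_j})=\tr K(P)+\tr A_j-2\tr(K(P)^{1/2}A_jK(P)^{1/2})^{1/2}$ on the left and cancelling $\sum_j w_j\tr A_j$ gives $2\tr K(P)-2\tr M_{K(P)}\le0$, i.e.\ $\tr K(P)\le\tr M_{K(P)}$.

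\emph{Step 2 (monotonicity and the upper bound).} For any $P\in\mathbb{P}_m$, Cauchy--Schwarz for the Hilbert--Schmidt inner product gives $\tr M_P=\tr\big((P^{1/2})^{*}M_PP^{-1/2}\big)\le\|P^{1/2}\|_2\,\|M_PP^{-1/2}\|_2=(\tr P)^{1/2}(\tr K(P))^{1/2}$, using that $M_P$ is Hermitian and $\|M_PP^{-1/2}\|_2^2=\tr(P^{-1/2}M_P^2P^{-1/2})=\tr K(P)$. Apply Step 1 with $P=S_{r-1}$ (so $K(P)=S_r$) and then this bound with $P=S_r$ to obtain $\tr S_r\le\tr M_{S_r}\le(\tr S_r)^{1/2}(\tr S_{r+1})^{1/2}$, hence $\tr S_r\le\tr S_{r+1}$ for every $r\ge1$. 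For the upper bound, set $F(P):=\sum_j w_j\,d_W^2(P,A_j)=\tfrac12\sum_j w_j\tr A_j+\tfrac12\tr P-\tr M_P$, whose unique minimizer is $\Omega$; since $M_\Omega=\Omega$, $F(\Omega)=\tfrac12\sum_j w_j\tr A_j-\tfrac12\tr\Omega$, while for $r\ge1$ the bound $\tr M_{S_r}\ge\tr S_r$ yields $F(S_r)\le\tfrac12\sum_j w_j\tr A_j-\tfrac12\tr S_r$. Combining with $F(\Omega)\le F(S_r)$ gives $\tr S_r\le\tr\Omega$.

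\emph{Step 3 (convergence) and the main obstacle.} The estimate of Step 1 also gives $\sum_j w_j\mathcal{W}^2(\mu_{K(P)},\mu_{A_j})\le\sum_j w_j\mathcal{W}^2(\mu_P,\mu_{A_j})$, the difference being $\int\|\bar T_Px-x\|^2d\mu_P\ge0$; hence $F(S_r)$ is non-increasing and bounded below by $F(\Omega)$, so $F(S_r)-F(S_{r+1})\to0$ and consequently $\int\|\bar T_{S_r}x-x\|^2d\mu_{S_r}\to0$. If the orbit $\{S_r\}$ stays in a fixed compact subset of $\mathbb{P}_m$, this forces $\bar T_{S_r}\to I$, so every subsequential limit of $\{S_r\}$ is a fixed point of $K$, hence equals $\Omega$, and therefore $S_r\to\Omega$; then $\tr S_r\uparrow\tr\Omega$ recovers $\tr S_r\le\tr\Omega$. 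I expect the genuine work to lie in (i) justifying rigorously that $R_j=T_{P,j}\circ\bar T_P^{-1}$ is an admissible transport map and tracking the push-forwards correctly (this is exactly where invertibility of $\bar T_P\in\mathbb{P}_m$ is used), and (ii) the compactness of the orbit, which requires separate two-sided operator bounds on the iterates. One should also note that the restriction $r\ge1$ is essential: $\tr S_0\le\tr M_{S_0}$ can fail for a poorly chosen initial point — for instance when $n=1$ one has $K\equiv A_1$, so $S_1=\Omega$ regardless of $\tr S_0$.
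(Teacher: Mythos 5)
This theorem is not proved in the paper at all; it is quoted from \cite{ABCM}, so there is no in-house argument to measure you against, and your reconstruction follows essentially the same optimal-transport route as that source. The portion of your argument devoted to the trace inequalities is correct and complete for $r\ge 1$: the identity $\bar T_P P\bar T_P=K(P)$, the admissibility of $R_j=T_{P,j}\circ\bar T_P^{-1}$ as a (suboptimal) transport from $\mu_{K(P)}$ to $\mu_{A_j}$, the cancellation of cross terms yielding $\tr K(P)\le\tr M_{K(P)}$, and the Cauchy--Schwarz step $\tr M_P\le(\tr P)^{1/2}(\tr K(P))^{1/2}$ all check out and combine to give $\tr S_r\le\tr S_{r+1}\le\tr\Omega(\omega;\mathbb{A})$ for $r\ge 1$. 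Your observation that the case $r=0$ can genuinely fail (e.g.\ $n=1$, where $S_1=A_1$ regardless of $S_0$) is a fair catch: the statement is only safe if ``natural numbers'' excludes $0$, and the single use made of the theorem in this paper (Theorem \ref{T:Wass-Renyi}) needs exactly $\tr S_1=\tr K(S_0)\le\tr\Omega$, which your argument does deliver.

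The genuine gap is the convergence claim $S_r\to\Omega(\omega;\mathbb{A})$. Your descent inequality $\sum_j w_j\mathcal{W}^2(\mu_{S_{r+1}},\mu_{A_j})\le\sum_j w_j\mathcal{W}^2(\mu_{S_r},\mu_{A_j})-\int\Vert\bar T_{S_r}x-x\Vert^2\,d\mu_{S_r}$ is correctly derived, but everything after it hinges on the orbit $\{S_r\}_{r\ge 1}$ lying in a compact subset of $\mathbb{P}_m$, i.e.\ on uniform two-sided bounds $cI\le S_r\le CI$ with $c>0$. The upper bound is free from $\tr S_r\le\tr\Omega$, but the lower bound on $\lambda_{\min}(S_r)$ is exactly where the real work in \cite{ABCM} lies, and you assume it rather than prove it. Without it there need be no convergent subsequence in $\mathbb{P}_m$, and $\int\Vert\bar T_{S_r}x-x\Vert^2\,d\mu_{S_r}\to 0$ does not force $\bar T_{S_r}\to I$ if $\lambda_{\min}(S_r)\to 0$ (the integral only controls $\bar T_{S_r}-I$ against the weight $S_r$, since it equals $\tr\bigl((\bar T_{S_r}-I)S_r(\bar T_{S_r}-I)\bigr)$). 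So as written you have a full proof of the trace inequalities for $r\ge 1$, which is all the paper actually uses, and a conditional proof of convergence whose missing ingredient you have at least identified honestly.
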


\begin{theorem} \label{T:Wass-Renyi}
For $1 \leq p < 2$
\begin{displaymath}
\tr \mathcal{R}_{1 - \frac{p}{2}, \frac{1}{2}}(\omega; \mathbb{A})^{p} \leq \tr \Omega(\omega; \mathbb{A}^{p}).
\end{displaymath}
\end{theorem}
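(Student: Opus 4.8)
The plan is to run the Wasserstein iteration map $K$ of \eqref{E:Wass} associated with the tuple $\mathbb{A}^{p}=(A_{1}^{p},\dots,A_{n}^{p})$, and to feed it a starting point manufactured out of the right mean itself. Set $\alpha=1-\tfrac{p}{2}$ and $z=\tfrac12$; the hypothesis $1\le p<2$ is exactly $0<\alpha\le z<1$, so Lemma~\ref{L:equation} and everything preceding it applies. Write $X=\mathcal{R}_{1-\frac{p}{2},\frac12}(\omega;\mathbb{A})$. Since $\frac{\alpha}{2z}=1-\frac{p}{2}$ and $\frac{1-\alpha}{z}=p$, the defining equation \eqref{E:Renyi} reads $X=\sum_{j=1}^{n}w_{j}\bigl(X^{1-\frac{p}{2}}A_{j}^{p}X^{1-\frac{p}{2}}\bigr)^{1/2}$.

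Next I would evaluate $K$, which for the tuple $\mathbb{A}^{p}$ is $K(C)=C^{-1/2}\bigl[\sum_{j}w_{j}(C^{1/2}A_{j}^{p}C^{1/2})^{1/2}\bigr]^{2}C^{-1/2}$, at the point $C=X^{2-p}\in\mathbb{P}_{m}$ (well defined because $p<2$). Then $C^{1/2}=X^{1-p/2}$, so the bracketed sum is exactly $X$ by the equation just displayed, and a one-line exponent count gives $K(X^{2-p})=X^{\frac{p}{2}-1}\,X^{2}\,X^{\frac{p}{2}-1}=X^{p}$. This identity --- ``one step of the Wasserstein iteration for $\mathbb{A}^{p}$, started at the $(2-p)$-th power of $\mathcal{R}_{1-\frac{p}{2},\frac12}(\omega;\mathbb{A})$, produces precisely its $p$-th power'' --- is the crux of the argument; everything else is soft.

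Finally, apply Theorem~\ref{T:iteration} with $S_{0}=X^{2-p}$: the sequence $S_{r+1}=K(S_{r})$ converges to $\Omega(\omega;\mathbb{A}^{p})$ and $\tr S_{r}\le\tr S_{r+1}\le\tr\Omega(\omega;\mathbb{A}^{p})$. Since $S_{1}=K(S_{0})=X^{p}$, reading this off at $r=1$ gives $\tr X^{p}=\tr S_{1}\le\tr\Omega(\omega;\mathbb{A}^{p})$, which is the assertion. There is no delicate estimate to carry out; the only genuine obstacle is spotting that $X^{2-p}$ is the right input for $K$, after which the trace monotonicity of Theorem~\ref{T:iteration} does all the work. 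I would still double-check the boundary case $p=1$ as a consistency test (there $X=\Omega(\omega;\mathbb{A})$, $\mathbb{A}^{p}=\mathbb{A}$, and the inequality degenerates to equality) and confirm that the cited form of Theorem~\ref{T:iteration} indeed yields $\tr S_{1}\le\tr\Omega(\omega;\mathbb{A}^{p})$ and not merely monotonicity starting from some later index.
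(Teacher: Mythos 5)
Your proposal is correct and is essentially identical to the paper's own proof: the paper also sets $\alpha=1-\tfrac{p}{2}$, observes that the fixed-point equation \eqref{E:Renyi} yields $K(X^{2\alpha})=X^{2(1-\alpha)}$ (which with $2\alpha=2-p$ is exactly your identity $K(X^{2-p})=X^{p}$), and then invokes the trace monotonicity of Theorem~\ref{T:iteration} with $S_{0}=X^{2-p}$. Your consistency check at $p=1$ and the remark about which index the monotonicity starts from are fine but not needed.
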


\begin{proof}
Let $\alpha := 1 - \frac{p}{2}$ and $X = \mathcal{R}_{\alpha, \frac{1}{2}}(\omega; \mathbb{A})$. Then $p = 2 (1 - \alpha)$ and $0 < \alpha \leq \frac{1}{2}$. By \eqref{E:Renyi}
\begin{displaymath}
X = \sum_{j=1}^{n} w_{j} \left(X^{\alpha} A_{j}^{2(1 - \alpha)} X^{\alpha}\right)^{\frac{1}{2}} = \sum_{j=1}^{n} w_{j} \left(\left(X^{2 \alpha}\right)^{\frac{1}{2}} A_{j}^{p} \left(X^{2 \alpha}\right)^{\frac{1}{2}}\right)^{\frac{1}{2}}.
\end{displaymath}
Taking the square map and congruence transformation by $X^{- \alpha}$ on both sides yields
\begin{displaymath}
X^{2(1 - \alpha)} = \left(X^{2 \alpha}\right)^{- \frac{1}{2}} \left[ \sum_{j=1}^{n} w_{j} \left(\left(X^{2 \alpha}\right)^{\frac{1}{2}} A_{j}^{p} \left(X^{2 \alpha}\right)^{\frac{1}{2}}\right)^{\frac{1}{2}} \right]^{2} (X^{2 \alpha})^{- \frac{1}{2}} = K\left(X^{2 \alpha}\right).
\end{displaymath}
Note from Theorem \ref{T:iteration} that the sequence $S_{r}$ constructed by the map $K$ in \eqref{E:Wass} with the initial value $S_{0} = X^{2 \alpha}$ converges to $\Omega(\omega; \mathbb{A}^{p})$, and furthermore,
\begin{displaymath}
\tr X^{2(1 - \alpha)} = \tr K\left(X^{2 \alpha}\right) \leq \tr \Omega(\omega; \mathbb{A}^{p}),
\end{displaymath}
which gives the desired inequality.
\end{proof}

\begin{remark}
The consequence of Theorem \ref{T:Wass-Renyi} can be rewritten as
\begin{displaymath}
\tr \mathcal{R}_{\alpha, \frac{1}{2}}(\omega; \mathbb{A})^{2 (1 - \alpha)} \leq \tr \Omega(\omega; \mathbb{A}^{2 (1 - \alpha)}).
\end{displaymath}
Furthermore, we have from Theorem \ref{T:Wass-Renyi}
\begin{displaymath}
\tr \mathcal{R}_{1 - \frac{p}{2^{k}}, \frac{1}{2}}(\omega; \mathbb{A}^{2^{k-1}})^{\frac{p}{2^{k-1}}} \leq \tr \Omega(\omega; \mathbb{A}^{p})
\end{displaymath}
for any $p \geq 1$ such that $2^{k-1} \leq p < 2^{k}$ for some $k \in \mathbb{N}$.
\end{remark}

\section{Hadamard product}

The \emph{Hadamard} (or \emph{Schur}) \emph{product} $A \circ B$ of $A = [a_{ij}]$ and $B = [b_{ij}]$ in $M_{m, k}$ is the $m \times k$ matrix, which is defined by the entrywise product:
\begin{displaymath}
\displaystyle A \circ B := [a_{ij} b_{ij}].
\end{displaymath}
Note that Hadamard product is bilinear, commutative, and associative. Furthermore, the Hadamard product gives us a binary operation on $M_{m, k}$. Moreover, the Hadamard product preserves positivity; the Hadamard product of two positive definite (positive semidefinite, respectively) matrices is positive definite (positive semidefinite, respectively) matrices. This is known as the Schur product theorem \cite{HJ,Zh}.

The \emph{tensor} (or \emph{Kronecker}) \emph{product} $A \otimes B$ of $A = [a_{ij}] \in M_{m,k}$ and $B = [b_{ij}] \in M_{s,t}$ is the $ms \times kt$ matrix given by
\begin{displaymath}
A \otimes B :=
\left[
  \begin{array}{ccc}
    a_{11} B & \cdots & a_{1k} B \\
    \vdots & \ddots & \vdots \\
    a_{m1} B & \cdots & a_{mk} B \\
  \end{array}
\right] \in M_{ms, kt}.
\end{displaymath}
Note that the tensor product is bilinear and associative, but not commutative, see \cite{Bh,Zh}. There is a canonical relationship between the tensor product and Hadamard product via a positive unital linear map.
\begin{lemma}\cite[Lemma 4]{An} \label{L:An}
There exists a strictly positive and unital linear map $\Psi$ such that for any $A, B \in M_{m}$
\begin{displaymath}
\Psi(A \otimes B)= A \circ B.
\end{displaymath}
\end{lemma}

For convenience, we denote as for $\mathbb{A} = (A_{1}, \dots, A_{n}), \mathbb{B} = (B_{1}, \dots, B_{n}) \in \mathbb{P}_{m}^{n}$
\begin{displaymath}
\begin{split}
\mathbb{A} \otimes \mathbb{B} & := (\underbrace{A_{1} \otimes B_{1}, \dots, A_{1} \otimes B_{n}}, \dots, \underbrace{A_{n} \otimes B_{1}, \dots, A_{n} \otimes B_{n}}), \\
\mathbb{A} \circ \mathbb{B} & := (\underbrace{A_{1} \circ B_{1}, \dots, A_{1} \circ B_{n}}, \dots, \underbrace{A_{n} \circ B_{1}, \dots, A_{n} \circ B_{n}}).
\end{split}
\end{displaymath}
From Lemma \ref{L:equation} or \eqref{E:Renyi} one can easily obtain the following identity for the tensor product of $\alpha-z$ weighted right means.

\begin{theorem} \cite[Theorem 3.2]{HJK}\label{T:Tensor}
Let $\mathbb{A}, \mathbb{B} \in \mathbb{P}_{m}^{n}$, and let $\omega = (w_{1}, \dots, w_{n}), \mu = (\mu_{1}, \dots, \mu_{n}) \in \Delta_{n}$. Then
\begin{displaymath}
\mathcal{R}_{\alpha, z}(\omega; \mathbb{A}) \otimes \mathcal{R}_{\alpha, z}(\mu; \mathbb{B}) = \mathcal{R}_{\alpha, z}(\omega \otimes \mu; \mathbb{A} \otimes \mathbb{B})
\end{displaymath}
where $\omega \otimes \mu := (\underbrace{w_{1} \mu_{1}, \dots, w_{1} \mu_{n}}, \dots, \underbrace{w_{n} \mu_{1}, \dots, w_{n} \mu_{n}}) \in \Delta_{n^{2}}$.
\end{theorem}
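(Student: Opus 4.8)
The plan is to prove the tensor-product identity by verifying that the right-hand side candidate $\mathcal{R}_{\alpha,z}(\omega;\mathbb{A}) \otimes \mathcal{R}_{\alpha,z}(\mu;\mathbb{B})$ satisfies the defining fixed-point equation \eqref{E:Renyi} for the tuple $\mathbb{A}\otimes\mathbb{B}$ with weight $\omega\otimes\mu$, and then invoke uniqueness of the solution. Write $X = \mathcal{R}_{\alpha,z}(\omega;\mathbb{A})$ and $Y = \mathcal{R}_{\alpha,z}(\mu;\mathbb{B})$, so that $X = \sum_{i} w_i (X^{\frac{\alpha}{2z}} A_i^{\frac{1-\alpha}{z}} X^{\frac{\alpha}{2z}})^z$ and $Y = \sum_{k} \mu_k (Y^{\frac{\alpha}{2z}} B_k^{\frac{1-\alpha}{z}} Y^{\frac{\alpha}{2z}})^z$. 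I must show
\begin{displaymath}
X \otimes Y = \sum_{i,k} w_i \mu_k \left( (X\otimes Y)^{\frac{\alpha}{2z}} (A_i \otimes B_k)^{\frac{1-\alpha}{z}} (X\otimes Y)^{\frac{\alpha}{2z}} \right)^z .
\end{displaymath}

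The key steps, in order, are the following. First, recall the multiplicativity of functional calculus under tensor products for positive matrices: $(P\otimes Q)^r = P^r \otimes Q^r$ for any real exponent $r$ and $P,Q\in\mathbb{P}$; this follows because $P\otimes Q$ has eigenvalues $\lambda_p\lambda_q$ with tensor-product eigenvectors, and it is exactly the tool needed to distribute the powers $\frac{\alpha}{2z}$, $\frac{1-\alpha}{z}$, and the outer $z$ across the $\otimes$. Second, use the mixed-product property $(P_1\otimes Q_1)(P_2\otimes Q_2) = (P_1 P_2)\otimes(Q_1 Q_2)$ repeatedly to rewrite
\begin{displaymath}
(X\otimes Y)^{\frac{\alpha}{2z}} (A_i \otimes B_k)^{\frac{1-\alpha}{z}} (X\otimes Y)^{\frac{\alpha}{2z}}
= \left( X^{\frac{\alpha}{2z}} A_i^{\frac{1-\alpha}{z}} X^{\frac{\alpha}{2z}} \right) \otimes \left( Y^{\frac{\alpha}{2z}} B_k^{\frac{1-\alpha}{z}} Y^{\frac{\alpha}{2z}} \right),
\end{displaymath}
then apply the $z$-th power tensor rule to turn the right-hand side into $(X^{\frac{\alpha}{2z}} A_i^{\frac{1-\alpha}{z}} X^{\frac{\alpha}{2z}})^z \otimes (Y^{\frac{\alpha}{2z}} B_k^{\frac{1-\alpha}{z}} Y^{\frac{\alpha}{2z}})^z$. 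Third, plug this into the sum and use bilinearity of $\otimes$ to factor the double sum:
\begin{displaymath}
\sum_{i,k} w_i\mu_k \, U_i \otimes V_k = \left(\sum_i w_i U_i\right)\otimes\left(\sum_k \mu_k V_k\right) = X\otimes Y,
\end{displaymath}
where $U_i$ and $V_k$ denote the two factors and the last equality uses the fixed-point equations for $X$ and $Y$. Finally, since \eqref{E:Renyi} has a unique positive definite solution (established earlier via strict convexity of the objective function), this solution must be $\mathcal{R}_{\alpha,z}(\omega\otimes\mu;\mathbb{A}\otimes\mathbb{B})$, completing the proof.

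The calculation is essentially routine once the two tensor-product facts are in hand, so the only mild obstacle is bookkeeping: keeping the index ordering of the $n^2$-tuple $\omega\otimes\mu$ and $\mathbb{A}\otimes\mathbb{B}$ consistent with the stated block convention, and making sure the positivity of all the matrices involved (so that fractional powers are well-defined via functional calculus) is maintained throughout—both of which are immediate since $\mathbb{P}_m$ is closed under $\otimes$, congruence, and fractional powers. It is worth noting that the same argument works verbatim starting from the equivalent equation \eqref{E:matrix equation} of Lemma \ref{L:equation}; the $\otimes$-distributivity of the weighted geometric mean $\#_z$ and of the involved powers gives the same conclusion, which is presumably what "From Lemma \ref{L:equation} or \eqref{E:Renyi} one can easily obtain" refers to.
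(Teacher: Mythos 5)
Your proof is correct and takes essentially the same route the paper intends: the paper merely cites \cite{HJK} and remarks that the identity follows ``easily'' from the fixed-point equation \eqref{E:Renyi}, and your argument---verifying that $X\otimes Y$ satisfies that equation for $\omega\otimes\mu$ and $\mathbb{A}\otimes\mathbb{B}$ via the mixed-product property and $(P\otimes Q)^r=P^r\otimes Q^r$, then invoking uniqueness of the positive definite solution---is exactly that verification, carried out with all the necessary details in place.
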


Now, we show the bounds for the Hadamard product of $\alpha-z$ weighted right means.

\begin{theorem} \label{T:Hada1}
Let $\mathbb{A}, \mathbb{B} \in \mathbb{P}_{m}^{n}$ and let $\omega, \mu \in \Delta_{n}$. Then for $\frac{1}{2} \leq z < 1$
\begin{displaymath}
\mathcal{R}_{\alpha,z}(\omega; \mathbb{A})^{\frac{1-\alpha}{z}} \circ \mathcal{R}_{\alpha,z}(\mu;\mathbb{B})^{\frac{1-\alpha}{z}} \leq \mathcal{A}(\omega \otimes \mu;\mathbb{A}^{\frac{1-\alpha}{z}}\circ\mathbb{B}^{\frac{1-\alpha}{z}}).
\end{displaymath}
\end{theorem}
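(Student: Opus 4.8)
The plan is to combine Theorem~\ref{T:A-R ineq} with Theorem~\ref{T:Tensor} and then descend from the tensor product to the Hadamard product via the positive unital linear map $\Psi$ of Lemma~\ref{L:An}. First I would apply Theorem~\ref{T:A-R ineq} (valid since $\frac12 \le z < 1 \le 1$) separately to $\mathcal{R}_{\alpha,z}(\omega;\mathbb{A})$ and to $\mathcal{R}_{\alpha,z}(\mu;\mathbb{B})$, obtaining
\begin{displaymath}
\mathcal{R}_{\alpha,z}(\omega;\mathbb{A})^{\frac{1-\alpha}{z}} \le \mathcal{A}(\omega;\mathbb{A}^{\frac{1-\alpha}{z}}), \qquad
\mathcal{R}_{\alpha,z}(\mu;\mathbb{B})^{\frac{1-\alpha}{z}} \le \mathcal{A}(\mu;\mathbb{B}^{\frac{1-\alpha}{z}}).
\end{displaymath}

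Next I would tensor the two right means. By Theorem~\ref{T:Tensor}, $\mathcal{R}_{\alpha,z}(\omega;\mathbb{A}) \otimes \mathcal{R}_{\alpha,z}(\mu;\mathbb{B}) = \mathcal{R}_{\alpha,z}(\omega\otimes\mu;\mathbb{A}\otimes\mathbb{B})$, and since $(P\otimes Q)^{r} = P^{r}\otimes Q^{r}$ for positive definite $P,Q$ and any real $r$, raising to the $\frac{1-\alpha}{z}$ power gives
\begin{displaymath}
\mathcal{R}_{\alpha,z}(\omega;\mathbb{A})^{\frac{1-\alpha}{z}} \otimes \mathcal{R}_{\alpha,z}(\mu;\mathbb{B})^{\frac{1-\alpha}{z}}
= \mathcal{R}_{\alpha,z}(\omega\otimes\mu;\mathbb{A}\otimes\mathbb{B})^{\frac{1-\alpha}{z}}.
\end{displaymath}
Now apply Theorem~\ref{T:A-R ineq} once more to the right-hand side (with weights $\omega\otimes\mu$ and tuple $\mathbb{A}\otimes\mathbb{B}$), so that
\begin{displaymath}
\mathcal{R}_{\alpha,z}(\omega;\mathbb{A})^{\frac{1-\alpha}{z}} \otimes \mathcal{R}_{\alpha,z}(\mu;\mathbb{B})^{\frac{1-\alpha}{z}}
\le \mathcal{A}\!\left(\omega\otimes\mu;(\mathbb{A}\otimes\mathbb{B})^{\frac{1-\alpha}{z}}\right)
= \mathcal{A}\!\left(\omega\otimes\mu;\mathbb{A}^{\frac{1-\alpha}{z}}\otimes\mathbb{B}^{\frac{1-\alpha}{z}}\right),
\end{displaymath}
where the last equality again uses entrywise stability of powers under $\otimes$ together with the fact that $\mathcal{A}$ is just a weighted sum, so it commutes with tensoring in each slot.

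Finally I would apply $\Psi$ from Lemma~\ref{L:An}. Since $\Psi$ is positive (hence monotone) and linear, it preserves the operator inequality above and sends the weighted arithmetic mean of tensor products to the weighted arithmetic mean of Hadamard products term by term; using $\Psi(P\otimes Q) = P\circ Q$ on the left gives exactly
\begin{displaymath}
\mathcal{R}_{\alpha,z}(\omega;\mathbb{A})^{\frac{1-\alpha}{z}} \circ \mathcal{R}_{\alpha,z}(\mu;\mathbb{B})^{\frac{1-\alpha}{z}}
\le \mathcal{A}\!\left(\omega\otimes\mu;\mathbb{A}^{\frac{1-\alpha}{z}}\circ\mathbb{B}^{\frac{1-\alpha}{z}}\right),
\end{displaymath}
which is the claim. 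The only genuinely delicate point is bookkeeping: one must check that $(\mathbb{A}\otimes\mathbb{B})^{\frac{1-\alpha}{z}}$, indexed over the $n^2$ pairs, equals $\mathbb{A}^{\frac{1-\alpha}{z}}\otimes\mathbb{B}^{\frac{1-\alpha}{z}}$ as a tuple, and that $\Psi$ applied entrywise to this tuple's weighted sum reproduces $\mathcal{A}(\omega\otimes\mu;\mathbb{A}^{\frac{1-\alpha}{z}}\circ\mathbb{B}^{\frac{1-\alpha}{z}})$; both are routine once the indexing conventions introduced before Theorem~\ref{T:Tensor} are kept straight. No convexity or fixed-point machinery beyond what is already invoked in Theorem~\ref{T:A-R ineq} is needed.
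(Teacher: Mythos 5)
Your proposal is correct and follows essentially the same route as the paper's proof: identify the Hadamard product with $\Psi$ applied to the tensor product, use Theorem~\ref{T:Tensor} to rewrite the tensor of right means as a single right mean, apply Theorem~\ref{T:A-R ineq} to that $n^2$-tuple, and push the inequality through the positive linear map $\Psi$. The only difference is your opening step applying Theorem~\ref{T:A-R ineq} separately to each factor, which is never used and can be deleted.
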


\begin{proof}
Using Lemma \ref{L:An}, we get
\begin{displaymath}
\begin{split}
\mathcal{R}_{\alpha,z}(\omega; \mathbb{A})^{\frac{1-\alpha}{z}} \circ \mathcal{R}_{\alpha,z}(\mu;\mathbb{B})^{\frac{1-\alpha}{z}} & = \Psi\left(\mathcal{R}_{\alpha,z}(\omega; \mathbb{A})^{\frac{1-\alpha}{z}} \otimes \mathcal{R}_{\alpha,z}(\mu;\mathbb{B})^{\frac{1-\alpha}{z}} \right) \\
& = \Psi\left(\left(\mathcal{R}_{\alpha,z}(\omega; \mathbb{A}) \otimes \mathcal{R}_{\alpha,z}(\mu; \mathbb{B})\right)^{\frac{1-\alpha}{z}}\right) \\
& = \Psi\left(\mathcal{R}_{\alpha,z}(\omega \otimes \mu; \mathbb{A} \otimes \mathbb{B})^{\frac{1-\alpha}{z}}\right).
\end{split}
\end{displaymath}
The second equality follows from the property of tensor product, and the third follows from Theorem \ref{T:Tensor}. Moreover, applying Theorem \ref{T:A-R ineq} to the last equality and using Lemma \ref{L:An}, we obtain
\begin{displaymath}
\begin{split}
\Psi\left(\mathcal{R}_{\alpha,z}(\omega \otimes \mu; \mathbb{A} \otimes \mathbb{B})^{\frac{1-\alpha}{z}}\right)
& \leq \Psi \left( \mathcal{A}(\omega \otimes \mu; (\mathbb{A} \otimes \mathbb{B})^{\frac{1-\alpha}{z}})\right)\\
& = \mathcal{A}(\omega \otimes \mu; \Psi(\mathbb{A}^{\frac{1-\alpha}{z}} \otimes \mathbb{B}^{\frac{1-\alpha}{z}}))
= \mathcal{A}(\omega \otimes \mu; \mathbb{A}^{\frac{1-\alpha}{z}}\circ\mathbb{B}^{\frac{1-\alpha}{z}}).
\end{split}
\end{displaymath}
It completes the proof.
\end{proof}

\begin{theorem} \label{T:Hada2}
Let $\mathbb{A}, \mathbb{B} \in \mathbb{P}_{m}^{n}$ and let $\omega, \mu \in \Delta_{n}$. If $\mathcal{R}_{\alpha,z}(\omega; \mathbb{A}) \geq I$ and $\mathcal{R}_{\alpha,z}(\mu; \mathbb{B}) \geq I$ then
\begin{displaymath}
\mathcal{R}_{\alpha,z}(\omega; \mathbb{A})^{1-\frac{\alpha}{z}} \circ \mathcal{R}_{\alpha,z}(\mu; \mathbb{B})^{1-\frac{\alpha}{z}}
\leq P_{z}(\omega \otimes \mu; \mathbb{A}^{\frac{1-\alpha}{z}} \circ \mathbb{B}^{\frac{1-\alpha}{z}}),
\end{displaymath}
where $P_{z}$ denotes the matrix power mean for $z \in (0,1)$.
\end{theorem}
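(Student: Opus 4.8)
The plan is to run the same tensorization argument used in the proof of Theorem \ref{T:Hada1}, but with Theorem \ref{T:Renyi-power} replacing Theorem \ref{T:A-R ineq}, while dealing with the one extra subtlety: the matrix power mean, unlike the arithmetic mean, is \emph{not} preserved by the positive unital linear map $\Psi$ of Lemma \ref{L:An}, only pushed downward by it.

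First I would rewrite the left-hand side exactly as in the proof of Theorem \ref{T:Hada1}: using Lemma \ref{L:An}, the identity $(P\otimes Q)^{r}=P^{r}\otimes Q^{r}$ from functional calculus on tensor products, and Theorem \ref{T:Tensor}, one gets
$$\mathcal{R}_{\alpha,z}(\omega; \mathbb{A})^{1-\frac{\alpha}{z}} \circ \mathcal{R}_{\alpha,z}(\mu; \mathbb{B})^{1-\frac{\alpha}{z}} = \Psi\!\left( \mathcal{R}_{\alpha,z}(\omega \otimes \mu; \mathbb{A} \otimes \mathbb{B})^{1-\frac{\alpha}{z}} \right).$$
Next, since $\mathcal{R}_{\alpha,z}(\omega \otimes \mu; \mathbb{A} \otimes \mathbb{B}) = \mathcal{R}_{\alpha,z}(\omega; \mathbb{A}) \otimes \mathcal{R}_{\alpha,z}(\mu; \mathbb{B})$ by Theorem \ref{T:Tensor}, and a tensor product of matrices $\geq I$ is again $\geq I$, the hypotheses give $\mathcal{R}_{\alpha,z}(\omega \otimes \mu; \mathbb{A} \otimes \mathbb{B}) \geq I$. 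Hence Theorem \ref{T:Renyi-power} applies to the tuple $\mathbb{A}\otimes\mathbb{B}$ with weights $\omega\otimes\mu$, yielding
$$\mathcal{R}_{\alpha,z}(\omega \otimes \mu; \mathbb{A} \otimes \mathbb{B})^{1-\frac{\alpha}{z}} \leq P_{z}\!\left(\omega \otimes \mu; (\mathbb{A} \otimes \mathbb{B})^{\frac{1-\alpha}{z}}\right).$$
Applying the monotone map $\Psi$ and using $\Psi\!\left((\mathbb{A}\otimes\mathbb{B})^{\frac{1-\alpha}{z}}\right) = \Psi\!\left(\mathbb{A}^{\frac{1-\alpha}{z}}\otimes\mathbb{B}^{\frac{1-\alpha}{z}}\right) = \mathbb{A}^{\frac{1-\alpha}{z}}\circ\mathbb{B}^{\frac{1-\alpha}{z}}$ (Lemma \ref{L:An} again), the theorem reduces to the inequality $\Psi\!\left( P_{z}(\nu; \mathbb{C}) \right) \leq P_{z}\!\left(\nu; \Psi(\mathbb{C})\right)$, where I abbreviate $\nu := \omega\otimes\mu\in\Delta_{n^{2}}$ and $\mathbb{C} := (\mathbb{A}\otimes\mathbb{B})^{\frac{1-\alpha}{z}} = (C_{1},\dots,C_{n^{2}})$.

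This last inequality is the main obstacle, and I would prove it from the Ando inequality for weighted geometric means, $\Psi(C \#_{z} D) \leq \Psi(C) \#_{z} \Psi(D)$ (valid for $z\in[0,1]$ and any positive linear map), combined with the fixed-point description of $P_{z}$ recalled at the start of Section 3. Writing $X = P_{z}(\nu; \mathbb{C})$, so that $X = \sum_{j=1}^{n^{2}} \nu_{j}\, X \#_{z} C_{j}$, linearity of $\Psi$ and the Ando inequality give
$$\Psi(X) = \sum_{j=1}^{n^{2}} \nu_{j}\, \Psi(X \#_{z} C_{j}) \leq \sum_{j=1}^{n^{2}} \nu_{j}\, \Psi(X) \#_{z} \Psi(C_{j}) =: h(\Psi(X)).$$
Here $\Psi(C_{j})\in\mathbb{P}_{m}$ because $\Psi$ is strictly positive, so $h$ is the operator monotone, strict Thompson-metric contraction whose unique fixed point is $P_{z}(\nu; \Psi(\mathbb{C}))$; iterating $\Psi(X)\leq h(\Psi(X))\leq h^{2}(\Psi(X))\leq\cdots$ and letting the iterates converge yields $\Psi(X)\leq P_{z}(\nu;\Psi(\mathbb{C}))$, which is precisely the claim. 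Chaining the displayed inequalities then completes the proof.
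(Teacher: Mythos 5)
Your proposal is correct and follows essentially the same route as the paper's proof: rewrite the Hadamard product via Lemma \ref{L:An} and Theorem \ref{T:Tensor} as $\Psi$ applied to the right mean of the tensorized family, observe that this right mean is $\geq I$, apply Theorem \ref{T:Renyi-power}, and then push the resulting power-mean bound through $\Psi$. The only difference is that where the paper simply cites \cite[Proposition 3.5]{LP} for the inequality $\Psi\bigl(P_{z}(\nu;\mathbb{C})\bigr)\leq P_{z}\bigl(\nu;\Psi(\mathbb{C})\bigr)$, you supply its proof directly from Ando's inequality $\Psi(C\#_{z}D)\leq\Psi(C)\#_{z}\Psi(D)$ and the fixed-point iteration defining $P_{z}$, which is a valid (and essentially the standard) argument for that cited fact.
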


\begin{proof}
By Lemma \ref{L:An} and Theorem \ref{T:Tensor}, we have
\begin{displaymath}
\begin{split}
\mathcal{R}_{\alpha,z}(\omega; \mathbb{A})^{1 - \frac{\alpha}{z}} \circ \mathcal{R}_{\alpha,z}(\mu; \mathbb{B})^{1 - \frac{\alpha}{z}} &= \Psi \left( \mathcal{R}_{\alpha,z}(\omega; \mathbb{A})^{1 - \frac{\alpha}{z}} \otimes \mathcal{R}_{\alpha,z}(\mu; \mathbb{B})^{1 - \frac{\alpha}{z}} \right) \\
&= \Psi \left( \mathcal{R}_{\alpha,z}(\omega \otimes \mu; \mathbb{A} \otimes \mathbb{B})^{1 - \frac{\alpha}{z}} \right).
\end{split}
\end{displaymath}
Note that $\mathcal{R}_{\alpha,z}(\omega \otimes \mu ; \mathbb{A} \otimes \mathbb{B}) \geq I,$ since $\mathcal{R}_{\alpha,z}(\omega; \mathbb{A}) \geq I$ and $\mathcal{R}_{\alpha,z}(\mu; \mathbb{B}) \geq I$. So applying Theorem \ref{T:Renyi-power}, we obtain
\begin{displaymath}
\begin{split}
\Psi \left( \mathcal{R}_{\alpha,z}(\omega \otimes \mu; \mathbb{A} \otimes \mathbb{B})^{1 - \frac{\alpha}{z}} \right)
& \leq \Psi \left( P_{z}(\omega \otimes \mu; (\mathbb{A} \otimes \mathbb{B})^{\frac{1-\alpha}{z}}) \right) \\
& \leq P_{z}(\omega \otimes \mu; \Psi(\mathbb{A}^{\frac{1-\alpha}{z}} \otimes \mathbb{B}^{\frac{1-\alpha}{z}}))
= P_{z}(\omega \otimes \mu; \mathbb{A}^{\frac{1-\alpha}{z}} \circ \mathbb{B}^{\frac{1-\alpha}{z}}).
\end{split}
\end{displaymath}
The second inequality follows from \cite[Proposition 3.5]{LP}.
\end{proof}

\begin{remark}
By the monotonicity of matrix power means for parameters in \eqref{E:para-monotonicity} we have from Theorem \ref{T:Hada2} that
\begin{displaymath}
\mathcal{R}_{\alpha,z}(\omega; \mathbb{A})^{1 - \frac{\alpha}{z}} \circ \mathcal{R}_{\alpha,z}(\mu; \mathbb{B})^{1 - \frac{\alpha}{z}} \leq P_{z}(\omega \otimes \mu; \mathbb{A}^{\frac{1-\alpha}{z}} \circ \mathbb{B}^{\frac{1-\alpha}{z}}) \leq \mathcal{A}(\omega \otimes \mu; \mathbb{A}^{\frac{1-\alpha}{z}} \circ \mathbb{B}^{\frac{1-\alpha}{z}}).
\end{displaymath}
%One can not compare $\mathcal{R}_{\alpha,z}(\omega; \mathbb{A})$ and $\mathcal{R}_{\alpha,z}(\omega; \mathbb{A})^{\frac{1-\alpha}{z}}$ for $\frac{1}{2} \leq z < 1$ with respect to the Loewner order. So Theorem \ref{T:Hada1} and Theorem \ref{T:Hada2} are different consequences for the Hadamard product of right means.
\end{remark}

\begin{corollary}
Let $\mathbb{A} = (A_{1}, \dots, A_{n}), \mathbb{B} = (B_{1}, \dots, B_{n}) \in \mathbb{P}_{m}^{n}$ satisfying $A_{j} \geq a I$ and $B_{j} \geq b I$ for all $j$ and some $a, b > 0$. Then
\begin{displaymath}
\mathcal{R}_{\alpha,z}(\omega; \mathbb{A})^{1 - \frac{\alpha}{z}} \circ \mathcal{R}_{\alpha,z}(\mu; \mathbb{B})^{1 - \frac{\alpha}{z}} \leq (a b)^{1 - \frac{1}{z}} P_{z}(\omega \otimes \mu; \mathbb{A}^{\frac{1-\alpha}{z}} \circ \mathbb{B}^{\frac{1-\alpha}{z}}).
\end{displaymath}
\end{corollary}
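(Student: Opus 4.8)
The plan is to reduce this corollary to Theorem \ref{T:Hada2} by first using the homogeneity of the $\alpha$-$z$ weighted right mean (Lemma \ref{L:properties} (2)) to rescale both tuples so that the hypotheses $\mathcal{R}_{\alpha,z}(\omega;\mathbb{A})\geq I$ and $\mathcal{R}_{\alpha,z}(\mu;\mathbb{B})\geq I$ become available, and then to track how the scalar factors propagate through the various powers appearing in the statement.

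First I would set $\mathbb{A}' := a^{-1}\mathbb{A}$ and $\mathbb{B}' := b^{-1}\mathbb{B}$. Since $A_j\geq aI$ gives $a^{-1}A_j\geq I$ for all $j$, Lemma \ref{L:boundedness} (applied with lower bound $1$) yields $\mathcal{R}_{\alpha,z}(\omega;\mathbb{A}')\geq I$, and similarly $\mathcal{R}_{\alpha,z}(\mu;\mathbb{B}')\geq I$. Applying Theorem \ref{T:Hada2} to the primed tuples gives
\begin{displaymath}
\mathcal{R}_{\alpha,z}(\omega;\mathbb{A}')^{1-\frac{\alpha}{z}}\circ\mathcal{R}_{\alpha,z}(\mu;\mathbb{B}')^{1-\frac{\alpha}{z}}\leq P_{z}(\omega\otimes\mu;\mathbb{A}'^{\frac{1-\alpha}{z}}\circ\mathbb{B}'^{\frac{1-\alpha}{z}}).
\end{displaymath}
Now I translate back: by homogeneity $\mathcal{R}_{\alpha,z}(\omega;\mathbb{A}')=a^{-1}\mathcal{R}_{\alpha,z}(\omega;\mathbb{A})$, so the left-hand side is $a^{-(1-\frac{\alpha}{z})}b^{-(1-\frac{\alpha}{z})}$ times $\mathcal{R}_{\alpha,z}(\omega;\mathbb{A})^{1-\frac{\alpha}{z}}\circ\mathcal{R}_{\alpha,z}(\mu;\mathbb{B})^{1-\frac{\alpha}{z}}$, using bilinearity of $\circ$. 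For the right-hand side, $\mathbb{A}'^{\frac{1-\alpha}{z}}\circ\mathbb{B}'^{\frac{1-\alpha}{z}} = a^{-\frac{1-\alpha}{z}}b^{-\frac{1-\alpha}{z}}\,\mathbb{A}^{\frac{1-\alpha}{z}}\circ\mathbb{B}^{\frac{1-\alpha}{z}}$ entrywise, and the matrix power mean is homogeneous, so $P_{z}(\omega\otimes\mu;\mathbb{A}'^{\frac{1-\alpha}{z}}\circ\mathbb{B}'^{\frac{1-\alpha}{z}}) = (ab)^{-\frac{1-\alpha}{z}}P_{z}(\omega\otimes\mu;\mathbb{A}^{\frac{1-\alpha}{z}}\circ\mathbb{B}^{\frac{1-\alpha}{z}})$.

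Multiplying both sides of the inequality by $(ab)^{1-\frac{\alpha}{z}}>0$ then gives
\begin{displaymath}
\mathcal{R}_{\alpha,z}(\omega;\mathbb{A})^{1-\frac{\alpha}{z}}\circ\mathcal{R}_{\alpha,z}(\mu;\mathbb{B})^{1-\frac{\alpha}{z}}\leq (ab)^{(1-\frac{\alpha}{z})-\frac{1-\alpha}{z}}\,P_{z}(\omega\otimes\mu;\mathbb{A}^{\frac{1-\alpha}{z}}\circ\mathbb{B}^{\frac{1-\alpha}{z}}),
\end{displaymath}
and the exponent simplifies as $(1-\tfrac{\alpha}{z})-\tfrac{1-\alpha}{z} = 1-\tfrac{\alpha}{z}-\tfrac1z+\tfrac{\alpha}{z} = 1-\tfrac1z$, which is exactly the claimed bound. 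The one point requiring care — the only place anything could go wrong — is this exponent bookkeeping: one must be sure to apply Lemma \ref{L:boundedness} only for the lower bound (a single-sided estimate is all the hypothesis gives), and to keep the two different exponents $1-\frac{\alpha}{z}$ and $\frac{1-\alpha}{z}$ distinct throughout, since it is precisely their difference that produces the factor $(ab)^{1-\frac1z}$. Everything else is a routine application of homogeneity and bilinearity of the Hadamard product.
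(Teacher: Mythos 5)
Your proof is correct and follows essentially the same route as the paper: rescale by $a^{-1}$ and $b^{-1}$, invoke Lemma \ref{L:boundedness} and homogeneity to meet the hypotheses of Theorem \ref{T:Hada2}, then undo the scaling and simplify the exponent $(1-\tfrac{\alpha}{z})-\tfrac{1-\alpha}{z}=1-\tfrac{1}{z}$. The exponent bookkeeping you flag as the delicate point is exactly where the paper's own computation lives, and yours matches it.
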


\begin{proof}
Since $A_{j} \geq a I$ for all $j$ and some $a > 0$, we have $\mathcal{R}_{\alpha,z}(\omega; \mathbb{A}) \geq a I$ from Lemma \ref{L:boundedness}. By homogeneity of the $\alpha-z$ weighted right mean in Lemma \ref{L:properties} (2), $\mathcal{R}_{\alpha,z}(\omega; \frac{1}{a} \mathbb{A}) \geq I$. Similarly, $\mathcal{R}_{\alpha,z}(\mu; \frac{1}{b} \mathbb{B}) \geq I$. By Theorem \ref{T:Hada2} we have
\begin{displaymath}
\mathcal{R}_{\alpha,z} \left( \omega; \frac{1}{a} \mathbb{A} \right)^{1 - \frac{\alpha}{z}} \circ \mathcal{R}_{\alpha,z} \left( \mu; \frac{1}{b} \mathbb{B} \right)^{1 - \frac{\alpha}{z}} \leq P_{z} \left( \omega \otimes \mu; \left( \frac{1}{a} \mathbb{A} \right)^{\frac{1-\alpha}{z}} \circ \left( \frac{1}{b} \mathbb{B} \right)^{\frac{1-\alpha}{z}} \right).
\end{displaymath}
Equivalently, by using the homogeneity of the $\alpha-z$ weighted right mean and the matrix power mean
\begin{displaymath}
\left( \frac{1}{a b} \right)^{1 - \frac{\alpha}{z}} \mathcal{R}_{\alpha,z}(\omega; \mathbb{A})^{1 - \frac{\alpha}{z}} \circ \mathcal{R}_{\alpha,z}(\mu; \mathbb{B})^{1 - \frac{\alpha}{z}}
\leq \left( \frac{1}{a b} \right)^{\frac{1-\alpha}{z}} P_{z}(\omega \otimes \mu; \mathbb{A}^{\frac{1-\alpha}{z}} \circ \mathbb{B}^{\frac{1-\alpha}{z}}).
\end{displaymath}
By a simple calculation we obtain the desired inequality.
\end{proof}

%\begin{corollary} \label{C:Hada}
%Let $\mathbb{A}, \mathbb{B} \in \mathbb{P}_{m}^{n}$ and let $\omega, \mu \in \Delta_{n}$. If $\mathcal{R}_{\alpha,z}(\omega; \mathbb{A}) \leq I$ and $\mathcal{R}_{\alpha,z}(\mu; \mathbb{B}) \leq I$ then for $0 < \alpha \leq z \leq 1 - \alpha < 1$
%\begin{displaymath}
%\mathcal{R}_{\alpha,z}(\omega; \mathbb{A})^{\frac{1-\alpha}{z}} \circ \mathcal{R}_{\alpha,z}(\mu; \mathbb{B})^{\frac{1-\alpha}{z}} \leq P_{z}(\omega \otimes \mu; \mathbb{A}^{\frac{1-\alpha}{z}} \circ \mathbb{B}^{\frac{1-\alpha}{z}}).
%\end{displaymath}
%\end{corollary}

%\begin{proof}
%Assume that $X = \mathcal{R}_{\alpha,z}(\omega; \mathbb{A}) \leq I$ and $Y = \mathcal{R}_{\alpha,z}(\mu; \mathbb{B}) \leq I$. Then $X^{\frac{1-\alpha}{z}} \leq X$, since $X^{\frac{1-\alpha}{z} - 1} = X^{\frac{1-\alpha-z}{z}} \leq I$ for $\alpha + z \leq 1$. Similarly, $Y^{\frac{1-\alpha}{z}} \leq Y$. By the monotonicity of Hadamard product and Theorem \ref{T:Hada2} we obtain
%\begin{displaymath}
%X^{\frac{1-\alpha}{z}} \circ Y^{\frac{1-\alpha}{z}} \leq X \circ Y \leq P_{z}(\omega \otimes \mu; \mathbb{A}^{\frac{1-\alpha}{z}} \circ \mathbb{B}^{\frac{1-\alpha}{z}}).
%\end{displaymath}
%\end{proof}

\section{Summary and final remark}

We have shown in this paper many interesting properties of $\alpha-z$ weighted right mean such as operator inequalities with the matrix power means, trace inequality with the Wasserstein mean, and inequalities in terms of Hadamard product. We can arrange the consequences with the following figure:
\begin{figure}[h!]
\centering
\includegraphics[scale=0.3]{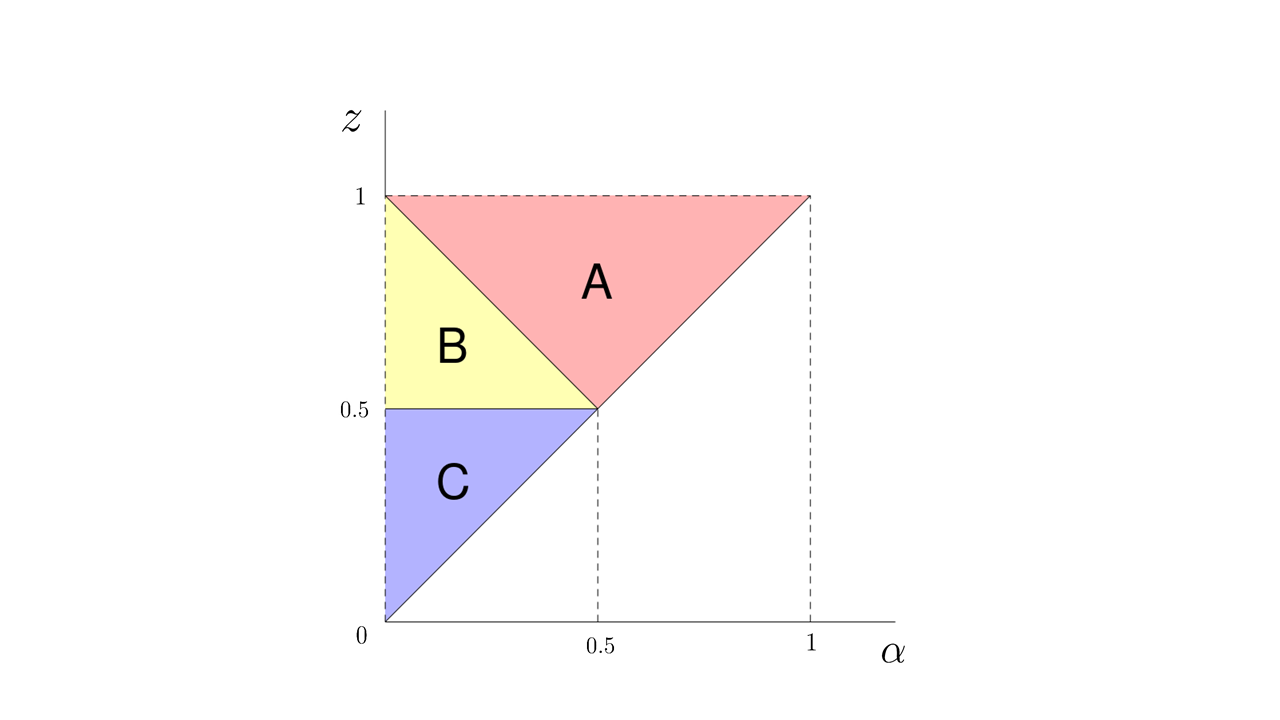}
\caption{Regions partitioned by properties of the $\alpha-z$ weighted right mean} \label{fig1}
\end{figure}

In Figure \ref{fig1} we denote as
\begin{displaymath}
\begin{split}
\textrm{A} & = \{ (\alpha, z) \in (0,1)^{2}: z \geq \max \{ 1 - \alpha, \alpha \} \}, \\
\textrm{B} & = \{ (\alpha, z) \in (0,1)^{2}: 1/2 \leq z \leq 1 - \alpha \}, \\
\textrm{C} & = \{ (\alpha, z) \in (0,1)^{2}: \alpha \leq z \leq 1/2 \}.
\end{split}
\end{displaymath}
The $\alpha-z$ weighted right mean $\mathcal{R}_{\alpha, z}$ is defined on the region $\textrm{A} \cup \textrm{B} \cup \textrm{C}$, and Theorem \ref{T:inequalities-2}, Theorem \ref{T:Renyi-power}, and Theorem \ref{T:Hada2} are satisfied on the same region. Finally, Theorem \ref{T:A-R ineq} and Theorem \ref{T:Hada1} hold on the region $\textrm{A} \cup \textrm{B}$, and Theorem \ref{T:Wass-Renyi} is satisfied on the region $\textrm{B} \cap \textrm{C}$.
%Finally, Corollary \ref{C:Hada} holds on the region $\textrm{B} \cup \textrm{C}$.

The quantum divergence $\Phi_{\alpha, z}$ is not symmetric, that is, $\Phi_{\alpha, z}(A, B) \neq \Phi_{\alpha, z}(B, A)$ for $A, B \in \mathbb{P}_{m}$ in general. So one may be interested in the left mean
\begin{displaymath}
\underset{X \in \mathbb{P}_{m}}{\arg \min} \sum_{j=1}^{n} w_{j} \Phi_{\alpha, z}(X, A_{j}).
\end{displaymath}
It has been known neither the divergence function
\begin{displaymath}
\Phi_{\alpha, z}(X, A) = \tr ((1 - \alpha) X + \alpha A) - \tr  \left( X^{\frac{1 - \alpha}{2z}} A^{\frac{\alpha}{z}} X^{\frac{1 - \alpha}{2z}} \right)^{z}
\end{displaymath}
for given $A \in \mathbb{P}_{m}$ is strictly convex nor the above minimization can be solved. So the existence and uniqueness for the solution of the above minimization would be an interesting topic, and we can find properties of the left mean analogous to the right mean upon success.

\vspace{1cm}

\textbf{Acknowledgement}

This work was supported by the National Research Foundation of Korea (NRF) grant funded by the Korea government (MSIT) (No. NRF-2018R1C1B6001394).

\end{document}